\newcommand{\eps}{\varepsilon}
\newcommand{\bI}{\mathbbm{1}}
\newcommand{\bR}{\mathbb{R}}
\newcommand{\bZ}{\mathbb{Z}}
\newcommand{\bE}{\mathbb{E}}
\newcommand{\bbI}{\mathbb{I}}
\newcommand{\ALG}{\mathsf{ALG}}
\newcommand{\OPT}{\mathsf{OPT}}
\newcommand{\acc}{\mathsf{acc}}
\newcommand{\cF}{\mathcal{F}}
\newcommand{\cI}{\mathcal{I}}
\newcommand{\tV}{\tilde{V}}
\newcommand{\cY}{\overline{Y}}
\newcommand{\cv}{\mathbf{v}}
\newcommand{\cJ}{\mathbf{J}}
\newcommand{\bx}{\mathbf{x}}
\newcommand{\bv}{\mathbf{v}}
\newcommand{\rank}{\mathrm{rank}}
\newcommand{\ssx}{x^*}
\newcommand{\ssy}{y^*}
\newcommand{\ssD}{D^*}
\newcommand{\hu}{\hat{u}}
\newcommand{\sumt}{\sum_{t=1}^T}
\newcommand{\sumS}{\sum_{S\subseteq N_t}}
\newcommand{\sumk}{\sum_{k=1}^K}
\newcommand{\hI}{\hat{I}}
\newcommand{\hD}{\hat{D}}
\newcommand{\hR}{\hat{R}}
\newcommand{\hV}{\hat{V}}
\newcommand{\vhV}{\mathbf{\hat{V}}}
\newcommand{\Feas}{\mathsf{Feas}}
\newcommand{\proph}{\mathsf{Alloc^*}}
\DeclareMathOperator*{\argmax}{arg\,max}
\begin{document}
\title{Prophet Inequalities on the Intersection of a Matroid and a Graph
\thanks{The authors would like to thank the anonymous reviewers for the 12th \textit{Symposium on Algorithmic Game Theory} (SAGT) who made many suggestions that improved the paper.  This is the full version of a one-page abstract from those proceedings.}
}
%
%
\author{Jackie Baek\inst{1}\orcidID{0000-0001-5538-509X} \and
Will Ma\inst{2}\orcidID{0000-0002-2420-4468}
}
\authorrunning{Baek and Ma}
%
\institute{
Operations Research Center, Massachusetts Institute of Technology, Camridge MA 02139, USA \email{baek@mit.edu} \and
Operations Research Team, Google Research, Camridge MA 02139, USA
\email{willma@google.com}
}
\maketitle              
\begin{abstract}
We consider prophet inequalities in a setting where agents correspond to both elements in a matroid and vertices in a graph.
A set of agents is feasible if they form both an independent set in the matroid and an independent set in the graph.
Our main result is an ex-ante $\frac{1}{2(d+1)}$-prophet inequality, where $d$ is a graph parameter upper-bounded by the maximum size of an independent set in the neighborhood of any vertex.

\hspace{12pt} We establish this result through a framework that sets both dynamic prices for elements in the matroid (using the method of balanced thresholds), and static but discriminatory prices for vertices in the graph (motivated by recent developments in approximate dynamic programming).
The threshold for accepting an agent is then the sum of these two prices.

\hspace{12pt} We show that for graphs induced by a certain family of interval-scheduling constraints, the value of $d$ is 1.
Our framework thus provides the first constant-factor prophet inequality when there are both matroid-independence constraints and interval-scheduling constraints.
It also unifies and improves several results from the literature, leading to a $\frac{1}{2}$-prophet inequality
when agents have XOS valuation functions over a set of items and use them for a finite interval duration, and more generally, a $\frac{1}{d+1}$-prophet inequality when these items each require a bundle of $d$ resources to procure.

\keywords{Prophet Inequalities \and Posted-price Mechanisms \and Approximate Dynamic Programming.}
\end{abstract}
\section{Introduction}

Prophet inequalities analyze the performance of online vs.\ offline algorithms in sequential selection problems, and have enjoyed a recent surge of uses in posted-price mechanism design.
The typical online selection problem can be described as follows.

A set of $T$ agents is denoted by $N=\{1,\ldots,T\}$.
Each agent $t$ has a valuation $V_t$ drawn from a known distribution $F_t$.
The valuations are realized independently, and revealed sequentially.
Each agent must be irrevocably accepted or rejected upon her valuation being revealed, with the feasibility constraint that the set of agents accepted by the end must lie in $\cF$, a downward-closed collection of subsets of $N$.
The objective is to maximize the expected sum of valuations of agents accepted.
We will refer to this as the welfare.

The algorithm's expected welfare is compared to that of a clairvoyant who can see all the realized valuations beforehand and make ``prophetic'' accept/reject decisions.
All of our results also hold relative to the stronger \textit{ex-ante} prophet, who can choose the correlation between the marginal distributions $F_1,\ldots,F_T$ to maximize his welfare (but for the algorithm, the valuations are still independent).
We let $\OPT$ denote the prophet's expected welfare, which equals $\bE[\max_{S\in\cF}\sum_{t\in S}V_t]$.

In this paper, we analyze the structure where $\cF$ is defined by the intersection of a matroid and a graph.
Specifically, there is a matroid $M=(N,\cI)$ and an undirected graph $G=(N,E)$, both defined on the set of agents $N$.
$\cF$ then consists of the subsets $S\subseteq N$ that are both independent in the matroid, i.e.\ $S\in\cI$, and independent in the graph, i.e.\ $\{t,t'\}\notin E$ for all $t,t'\in S$.

To state our main result, we need the following definitions.

\begin{definition} \label{def::d1}
For a matroid $M=(N,\cI)$, define $d_1(M)=0$ if $\cI=2^N$, and $d_1(M)=1$ otherwise.
\end{definition}
$\cI=2^N$ corresponds to the free matroid, under which all subsets are independent.

\begin{definition} \label{def::d2}
For a graph $G=(N,E)$, define
\begin{align} \label{eqn::dGraphDef}
d_2(G)=\max_{t\in N}\alpha\Big(G\big[\{t'<t:\{t,t'\}\in E\}\big]\Big),
\end{align}
where $G[\cdot]$ denotes the subgraph of $G$ induced by a set of vertices, and $\alpha(\cdot)$ denotes the maximum size of an independent set in a graph.
\end{definition}
We explain expression~(\ref{eqn::dGraphDef}).  $\{t,t'\}\in E$ implies that $t$ cannot be accepted alongside $t'$, and $t'<t$ implies that $t'$ could have been accepted before $t$ to ``block'' agent $t$.
However, some of these agents $t'$ may also block each other, in which case they are adjacent in the induced subgraph $G[\cdot]$.
$\alpha(\cdot)$ counts the maximum number of such agents that can be simultaneously accepted, and $d_2(G)$ takes the maximum of these numbers over $t\in N$.
We note that $d_2(G)$ is upper-bounded by $\max_t\alpha(G[\{t':\{t,t'\}\in E\}])$, the
maximum size of an independent set in the neighborhood of any vertex.

\begin{theorem} \label{thm::mr}
For any matroid $M$ and graph $G$, the expected welfare of an online algorithm is at least
\begin{align*}
\frac{1}{(d_1(M)+1)(d_2(G)+1)}\cdot\OPT.
\end{align*}
\end{theorem}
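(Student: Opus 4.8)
Fix a welfare-maximizing ex-ante prophet and let $x_t$ be the probability it accepts agent $t$. Because its realized selection always lies in $\cF$, the vector $(x_t)_t$ lies in the matroid independent-set polytope of $M$ and satisfies $\sum_{t'<t,\,\{t,t'\}\in E}x_{t'}\le d_2(G)$ for every $t$: inside any independent set of $G$, the earlier neighbours of $t$ form an independent set of $G[\{t'<t:\{t,t'\}\in E\}]$ and so number at most $d_2(G)$, and the bound extends to the convex hull. Writing $\theta^{(q)}_t$ for the $q$-quantile of $F_t$ and $\theta_t:=\theta^{(x_t)}_t$, a standard quantile rearrangement gives $\OPT\le\sum_t\bE[V_t\,\bI(V_t\ge\theta_t)]$, so it suffices to exhibit an online algorithm of welfare $\ge\tfrac{1}{(d_1(M)+1)(d_2(G)+1)}\sum_t\bE[V_t\,\bI(V_t\ge\theta_t)]$.

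\textbf{Step 2 (summed prices).} The algorithm keeps the accepted set $S$ and, on agent $t$, accepts iff $V_t\ge p^G_t+p^M_t(S)$, $S\cup\{t\}\in\cI$, and no earlier graph-neighbour of $t$ lies in $S$. The static graph price is a quantile threshold $p^G_t=\theta^{(q_t)}_t$, with $q_t$ chosen so that agent $t$ is accepted with probability at most $x_t/(d_2(G)+1)$; this is done by setting $q_t=x_t/((d_2(G)+1)\pi_t)$, where $\pi_t$ is the probability --- computable from $F_1,\ldots,F_{t-1}$, $M$, $G$ --- that no earlier graph-neighbour of $t$ gets accepted. The dynamic matroid price $p^M_t(S)$ is the $1$-balanced threshold of Kleinberg--Weinberg for $M$, run on the valuations net of the graph prices, $\tV_t:=(V_t-p^G_t)^{+}$; these satisfy the matroid-exchange inequality $\sum_{t\in Q}p^M_t\le\sum_{t\in\ALG}p^M_t$ for the ex-ante prophet's realized set $Q$, and collapse to $p^M\equiv 0$ (with $d_1(M)=0$) when $M$ is free.

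\textbf{Step 3 (analysis).} By induction on $t$: a union bound over the earlier neighbours, using Step 1's inequality and the inductive hypothesis that each earlier agent is accepted with probability at most $x_{t'}/(d_2(G)+1)$, gives $\pi_t\ge 1/(d_2(G)+1)$, which makes $q_t\le x_t$ well-defined and closes the induction. Conditioning on the (value-independent) event that no earlier graph-neighbour blocks $t$ turns the algorithm into the Kleinberg--Weinberg rule on the residuals $\tV$, so that rule's guarantee contributes the factor $\tfrac{1}{d_1(M)+1}$; and since $q_t\pi_t=x_t/(d_2(G)+1)$ with $q_t\le x_t$, the elementary monotonicity ``the average of the top-$q$ mass of $F_t$ is non-increasing in $q$'' gives, per agent,
\begin{align*}
\pi_t\cdot\bE[V_t\,\bI(V_t\ge\theta^{(q_t)}_t)]\ \ge\ \tfrac{1}{d_2(G)+1}\,\bE[V_t\,\bI(V_t\ge\theta_t)],
\end{align*}
which contributes the factor $\tfrac{1}{d_2(G)+1}$. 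Chaining the two (the graph prices charged are exactly what was shaved off to form $\tV$, so the two parts add back up to the Step 1 bound) yields welfare $\ge\tfrac{1}{(d_1(M)+1)(d_2(G)+1)}\OPT$.

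\textbf{Main obstacle.} The delicate point is forcing the matroid loss and the graph loss to compose multiplicatively rather than additively or quadratically. This is why the balanced-threshold argument is run on the residual distributions $\tV$ rather than on $V$, and why the graph reach-probabilities $\pi_t$ must be folded into the choice of the graph quantiles $q_t$ (so the self-referential definition of $q_t$ through $\pi_t$ has to be shown consistent, as in Step 3). The real work is verifying that the extra rejections forced by the graph constraint --- a feasibility restriction that is \emph{not} a matroid --- do not spoil the matroid-exchange inequality, i.e.\ that the Kleinberg--Weinberg potential argument survives being run underneath the graph filter.
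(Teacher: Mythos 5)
There is a genuine gap, and it sits exactly where you flag it: the claim that the Kleinberg--Weinberg balanced-threshold guarantee ``survives being run underneath the graph filter'' is asserted, not proven, and it is the entire difficulty of the theorem. Conditioning on the event that no earlier graph-neighbour of $t$ was accepted does not ``turn the algorithm into the KW rule on the residuals $\tV_t$'': that event is determined by which earlier agents were accepted, which is correlated with the matroid state $S$ and hence with the dynamic thresholds $p^M_t(S)$, so the conditional process is not the KW process on any fixed instance. More structurally, the KW revenue-plus-utility accounting requires that the algorithm accept \emph{every} matroid-feasible agent whose value clears the threshold; when the graph filter vetoes some of these acceptances, the collected thresholds no longer cover the prophet's value on the elements spanned by the algorithm's (now smaller) set, and no argument is given that the loss is bounded. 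The paper's Example~\ref{ex::KWBad} is precisely a warning that residual/balanced-price arguments can fail badly once graph constraints interact with them, which is why the interaction cannot be waved through. Relatedly, your Step~3 per-agent inequality only lower-bounds what an algorithm would earn if it accepted every unblocked agent clearing the graph price, \emph{ignoring} the matroid constraint and matroid price; multiplying that per-agent quantity by $\tfrac{1}{d_1(M)+1}$ and summing is not licensed by the KW theorem, which gives a global guarantee for a different process, not a per-agent multiplicative discount. Finally, the bookkeeping claim that ``the graph prices charged add back exactly what was shaved off'' does not hold as stated: prices are collected only on acceptance (probability at most $x_t/(d_2(G)+1)$), while the shaving affects the benchmark on every sample path, and no inequality is supplied to reconcile the two.

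For contrast, the paper resolves this composition by a different mechanism: the graph price $\pi_t$ is not a quantile of $F_t$ but the expected \emph{future surplus blocked} by accepting $t$ (computed by backward induction from the prophet's ex-ante marginals), and the matroid price is a residual difference of a ``restricted prophet'' whose values are binarized to $\ssy_t-\pi_t$. The multiplicative factor then falls out of two separate lemmas: Lemma~\ref{lem::firstStep} shows $\ALG\ge\frac{1}{d_1(M)+1}\hR(\emptyset)$, whose crux (Proposition~\ref{prop::key}) is exactly the accounting you are missing --- when agent $t$ is graph-blocked, the $\pi$-revenue already collected from the blocking neighbour pays for $t$'s lost surplus term $\ssx_t[\ssy_t-\tau(t|Y_{t-1})-\pi_t]^+$ --- and Lemma~\ref{lem::secondStep} shows $\hR(\emptyset)\ge\frac{1}{d_2(G)+1}\OPT$ using the ex-ante constraint you correctly state in Step~1. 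If you want to pursue your quantile/probability-matching route, the missing ingredient is a proof of an analogue of these two lemmas for your price structure, i.e.\ a pathwise argument showing that graph-forced rejections are compensated by revenue already collected; without it the multiplicative composition is unsupported.
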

Our algorithm is \textit{order-aware}, in that it needs to assume the agents' valuations will be revealed in the given order $1,\ldots,T$.
Before elaborating on our techniques in Section~\ref{sec::techniques}, we outline the implications of our Theorem~\ref{thm::mr} and various generalizations relative to the literature, and describe settings where $d_2(G)$ is small.

\subsection{Our Results, in relation to Previous Results} \label{sec::results}

In Theorem~\ref{thm::mr}, $d_2(G)$ is small if an agent cannot be blocked by \textit{many agents that don't block each other}.
One setting where this arises is when the agents arrive in order $1,\ldots,T$,
each requesting service for a duration starting with her time of arrival,
and need to be served by a single server.
Formally, associated with the agents are intervals $\bbI_1=[\ell_1,u_1],\ldots,\bbI_T=[\ell_T,u_T]$ satisfying $\ell_1\le\ldots\le\ell_T$, and a set of agents $S$ can be feasibly served if
\begin{align} \label{eqn::intervalGraph}
|\{t\in S:z\in\bbI_t\}|\le1 && \forall\ z\in[\ell_1, \max_tu_t].
\end{align}
In the graph $G$ induced by constraints~(\ref{eqn::intervalGraph}), two agents are adjacent if their intervals overlap.
For an agent $t$, any agents $t'<t$ with $\bbI_{t'}\cap\bbI_t\neq\emptyset$ must have $\bbI_{t'}$ contain the point $\ell_t$, since $\ell_{t'}\le\ell_t$ and the intervals are contiguous.
Therefore, all of these agents $t'$ are also adjacent to each other in $G$ through the point $\ell_t$, which implies that $d_2(G)\le1$.

We contrast this with a different type of interval constraint where the agents request service starting from a common point in time $\ell$, and there is a time-dependent service capacity $B(z)\in\bZ_{\ge0}$ for all $z\ge\ell$.
The agents request intervals $\bbI'_1=[\ell,u'_1],\ldots,\bbI'_T=[\ell,u'_T]$,
and a set of agents $S$ can be feasibly served if
\begin{align} \label{eqn::intervalMatroid}
|\{t\in S:z\in\bbI_t\}|\le B(z) && \forall\ z\ge\ell.
\end{align}
In (\ref{eqn::intervalMatroid}), since the intervals starting from the same point are nested, the constraints can be captured by a laminar matroid $M$, with $d_1(M)\le1$.

Therefore, Theorem~\ref{thm::mr} shows that the guarantee relative to the ex-ante prophet is at least $\frac{1}{(d_1(M)+1)(d_2(G)+1)}\ge1/4$
under the combination of constraints~(\ref{eqn::intervalGraph}) and~(\ref{eqn::intervalMatroid}).
This could model an
online rectangle packing problem, where the
horizontal projections have increasing left-boundaries and must satisfy (\ref{eqn::intervalGraph}), while the
vertical projections have identical top-boundaries and must satisfy (\ref{eqn::intervalMatroid}).
More generally, Theorem~\ref{thm::mr} implies a $(1/4)$-guarantee for any online matroid selection problem
under the additional constraint that each agent requires a processing time, during which no other agent can be served even if they are independent in the matroid.
To our knowledge, our framework provides the first constant-factor guarantee under the combined families of feasibility constraints.
Indeed, the constraints~(\ref{eqn::intervalGraph}) do not correspond to a matroid.\footnote{
Consider an example where intervals $\bbI_2,\ldots,\bbI_T$ are disjoint, but interval $\bbI_1$ is long and overlaps all of them.
In this case, constraints~(\ref{eqn::intervalGraph}) do not correspond to a matroid, and in fact require the intersection of $\Omega(T)$ matroids to capture.  Hence, applying a result from \cite{kleinberg2012matroid} would not yield a constant-factor guarantee.
}
Meanwhile, (\ref{eqn::intervalMatroid}) cannot be captured by the pairwise independence constraints of a graph.
If $E=\emptyset$ and the graph imposes no feasibility constraints, then $d_2(G)=0$ and the guarantee from Theorem~\ref{thm::mr} is $1/2$, which is the matroid prophet inequality from \cite{kleinberg2012matroid}.

In Section~\ref{sec::generalizations},
we consider the generalized setting studied in \cite{feldman2014combinatorial,duetting2017prophet}, where the agents have XOS valuation functions over a set of items.
We impose matroid- and graph- independence constraints on the subset of items allocated to the agents by the end, and
show that the guarantee of $\frac{1}{(d_1(M)+1)(d_2(G)+1)}$ from Theorem~\ref{thm::mr} still holds (Theorem~\ref{thm::XOS}).
If the matroid is free, then $d_1(M)=0$, and 
a corollary of Theorem~\ref{thm::XOS} is that the $\frac{1}{2}$-guarantee for XOS from \cite{feldman2014combinatorial}
still holds if the agents use the items allocated to them for a finite interval duration (instead of keeping the items forever).
More generally, we show that if each item requires a bundle of at most $d$ underlying resources (possibly for a finite interval duration) to procure, then $d_2(G)\le d$, leading to a guarantee of $1/(d+1)$ (Proposition~\ref{prop::d2<=d}).

\subsection{Our Techniques, in relation to Previous Techniques} \label{sec::techniques}

Central to the development of prophet inequalities is the notion of a \textit{residual function}.
In the basic setting with an arbitrary feasible collection $\cF$, if $Y\in\cF$ is the set of agents that have already been accepted, then its residual is defined as
\begin{align} \label{eqn::originalResidual}
R(Y)=\bE_{\tV_1,\ldots,\tV_T}\left[\max_{S:S\cup Y\in\cF}\sum_{t\in S}\tV_t\right],
\end{align}
where $\tV_1,\ldots,\tV_T$ is a freshly sampled set of valuations.
The algorithm decides whether to accept an agent $t$ by comparing the actual realization of $V_t$ with the simulated threshold of $\gamma(R(Y)-R(Y\cup\{t\}))$, where $\gamma$ is a constant in $(0,1)$.
$\gamma$ is chosen depending\footnote{We note, however, that $\gamma$ does not depend on the given valuation distributions.} on $\cF$ to \textit{balance} the thresholds---for example, if $\cF$ is the independent sets of a matroid, then $\gamma=1/2$ ensures that the thresholds are neither too high nor too low \cite{kleinberg2012matroid}.
In the simplest XOS-valuation setting with only item capacity constraints,
the difference in residuals decomposes very nicely as a sum of \textit{item-prices} \cite{feldman2014combinatorial}.
It is important, however, that these residuals and prices are always computed based on a prophet who
``starts over'' and considers
every agent $1,\ldots,T$, even when some agents have already come and gone.

Unfortunately, this ``starting over'' does not exploit the temporal aspect of graph-independence constraints, as illustrated by the following example.
\begin{example} \label{ex::KWBad}
$T$ agents arrive in order, with agent 1 requesting service for a long interval $\bbI_1=[1,T+1]$, and each agent $t\ge2$ requesting service for a short interval $\bbI_t=[t,t+1/2]$.
There is a single server, so a set of agents $S$ is feasible if and only if $S$ satisfies constraints~(\ref{eqn::intervalGraph}).
Agent 1 has valuation $\frac{C+T\eps}{\eps}$ with probability $\eps$, and valuation 0 otherwise, where $C,\eps>0$ are constants.
Agents $2,\ldots,T$ deterministically have valuation 1.
\end{example}
Consider the thresholds set by the residual function~(\ref{eqn::originalResidual}) on Example~\ref{ex::KWBad}.
If no agents have been taken, i.e.\ $Y=\emptyset$, then
\begin{align*}
R(\emptyset)=\frac{C+T\eps}{\eps}\cdot\eps+(T-1)(1-\eps)=C+T-1+\eps.
\end{align*}
Meanwhile, taking any agent $t\ge2$ prevents the prophet from taking agent 1, so $R(\{2\})=\ldots=R(\{T\})=T-1$.\footnote{
Note that in our definition~(\ref{eqn::originalResidual}), taking an agent $t$ does not prevent the prophet from taking $t$ again ($S$ and $Y$ need not be disjoint),
although this distinction is purely for convenience and not critical.
}
Therefore, when $Y=\emptyset$, the threshold for agents $t\ge2$ is $\gamma(R(\emptyset)-R(\{t\}))=\gamma(C+\eps)$.
By setting $C=1/\gamma$, we can always construct an instance where these thresholds exceed 1, rejecting all agents $2,\ldots,T$.
The algorithm's welfare is only $\frac{C+T\eps}{\eps}\cdot\eps=C+T\eps$, while the prophet's welfare is $\OPT=R(\emptyset)\approx C+T$.
By choosing $T\gg C$ and $\eps=o(1/T)$,
the algorithm's welfare can be made an arbitrarily small fraction of $\OPT$.

The residual-based approach\footnote{
There are also more extensive frameworks \cite{duetting2017prophet,lucier2017economic}, but to the best of our knowledge, they still perform poorly on Example~\ref{ex::KWBad}, as we discuss in Appendix~\ref{appx::balancedPricesDNE}.
}
performed poorly on Example~\ref{ex::KWBad} because the first agent's existence continued to inflate the thresholds of agents $2,\ldots,T$, even after she had
already
come and gone.
To improve upon it, we incorporate dynamic programming, which is particularly designed to account for these temporal dynamics.
Motivated by recent developments in approximate dynamic programming \cite{rusmevichientong2017dynamic,ma2018constant}, we consider the following modification to the residual function.

For each agent $t$, let $\ssx_t$ denote the probability that she is accepted by the prophet,
and let $\ssy_t$ denote her expected valuation conditional on being accepted.
We then define $\pi_t$ as follows, using backward induction over $t=T,T-1,\ldots,1$:
\begin{align} \label{eqn::defPi}
\pi_t=\sum_{t'>t:\{t,t'\}\in E}\ssx_{t'}\cdot\max\{\ssy_{t'}-\pi_{t'},0\}.
\end{align}
$\pi_t$ can be interpreted as the ``cost'' of accepting agent $t$ with respect to the graph-independence constraints.
Indeed, for all agents $t'>t$, the summand in (\ref{eqn::defPi}) is the already-computed ``surplus'' earned by the prophet on agent $t'$, and the sum is over all \textit{future} surpluses $t'$ which are blocked by accepting agent $t$.

Our modified residual function is based on a ``restricted'' prophet, who sees valuations $\hV_1,\ldots,\hV_T$ that have been reduced in two ways.
First, the restricted prophet only sees a non-zero valuation for an agent $t$ if the actual prophet would have accepted $t$ on that sample path (this is formalized in Section~\ref{sec::proofMR}); otherwise, the agent's valuation is zero.
Second, the valuation of every agent $t$ is further reduced by $\pi_t$, with $\pi_t$ as defined in (\ref{eqn::defPi}).
Our restricted residual function is then
\begin{align} \label{eqn::ourResidual}
\hR(Y)=\bE_{\hV_1,\ldots,\hV_T}\left[\max_{S:S\cup Y\in\cI}\sum_{t\in S}\hV_t\right],
\end{align}
and we define threshold $\tau(t|Y)=\frac{1}{d_1(M)+1}(\hR(Y)-\hR(Y\cup\{t\}))$.
Our algorithm accepts an agent $t$ if and only if she is both feasible and satisfies
\begin{align} \label{eqn::combinedThreshold}
V_t\ge\tau(t|Y)+\pi_t.
\end{align}

Returning to Example~\ref{ex::KWBad}, we would have $\pi_2=\ldots=\pi_T=0$ (because agents $t\ge2$ do not block any future agents), and $\pi_1\approx T-1$.
In this case, the matroid is free (because all the constraints are captured by the graph), so $\tau(t|Y)=0$ and our algorithm ends up accepting every agent $t$ using decision rule~(\ref{eqn::combinedThreshold}), which is the optimal control for Example~\ref{ex::KWBad}.

In general, $\tau(t|Y)$ represents our price for the matroid and $\pi_t$ represents our price for the graph.
$\pi_t$
\textit{discriminates} based on the agent $t$, looking at which agents $t'>t$ get blocked, but is \textit{static} in that it does not depend on the current state $Y$.
By contrast,
$\tau(t|Y)$ \textit{dynamically} considers the addition of element $t$ to the current $Y$,
but otherwise does not discriminate based on the agent $t$.
A further contribution of our work is that
we show how both $\tau(t|Y)$ and $\pi_t$ can be \textit{computed efficiently} when $G$ is induced by an intersection of interval-scheduling constraints of the form (\ref{eqn::intervalGraph}), by implementing an \textit{ex-ante relaxation} (Section~\ref{sec::computation}).

Finally, we describe our analysis, which consists of two steps.  First, we show that the algorithm earns at least $\frac{1}{d_1(M)+1}\hR(\emptyset)$, where $\hR(\emptyset)$ represents the welfare of the restricted prophet (Lemma~\ref{lem::firstStep}).
Lemma~\ref{lem::firstStep} differs from the original matroid prophet inequality in that the algorithm is further constrained by the graph, but gets to play against a prophet who sees valuations $\hV_t$ which have been reduced by $\pi_t$.
If the matroid is free and thus $d_1(M)=0$, then Lemma~\ref{lem::firstStep} is still non-trivial, as it says that the algorithm, constrained by the graph, can match the restricted prophet in welfare.
Our analysis concludes by showing that the restricted prophet earns at least $\frac{1}{d_2(G)+1}$ times the welfare of the actual prophet (Lemma~\ref{lem::secondStep}).

A general take-away from our paper is that the way in which constraints are \textit{modeled} can lead to different algorithms and prophet inequalities.
For example, simple constraints on item supplies can be modeled either with a partition matroid or by adding edges to our graph, which results in substantially different algorithms.
In general, is there a systematic way of dividing up constraints between feasibility structures to yield the best prophet inequality?
We leave this open as interesting future work.

\subsection{Other Related Work}

Prophet inequalities originated in \cite{krengel1977semiamarts,krengel1978semiamarts}, and the connection to posted-price mechanism design was discovered in \cite{chawla2010multi}.
There has since been a surge of literature on prophet inequalities, and we defer a complete literature review to the survey by Lucier \cite{lucier2017economic}.
Our work can be classified as having a fixed (adversarial) arrival order,
which can be contrasted with random-order prophet inequalities \cite{ehsani2018prophet};
general but structured feasible sets,
which can be contrasted with arbitrary feasible sets \cite{rubinstein2016beyond}
or refined results on rank-1 matroids \cite{correa2019recent};
and additive rewards, a special case of combinatorial rewards \cite{rubinstein2017combinatorial}.
Our paper is most related to the existing work involving matroids \cite{kleinberg2012matroid,dutting2015polymatroid,duetting2017prophet} and XOS valuation functions \cite{feldman2014combinatorial,duetting2017prophet}.
We should mention that interval-scheduling constraints have also been studied in \cite{im2011secretary,chawla2019pricing}, where it is shown that with no assumptions on the intervals, the guarantee relative to the prophet is at most $O(\log\log L/\log L)$, where $L$ is the length of (number of items in) the longest interval.  That is, with no assumptions on the intervals, a constant-factor is impossible.

Finally, we discuss two recent developments in approximate dynamic programming (ADP) from which we borrow techniques.
\cite{rusmevichientong2017dynamic} has developed an ADP-based algorithm which is within 1/2 of the optimal DP in an application with reusable resources.  This is the motivation behind our interval-scheduling constraints of the form (\ref{eqn::intervalGraph}).  \cite{ma2018constant} has established a guarantee of $1/(d+1)$ in a setting where each item uses up to $d$ resources.

Our work makes further contributions beyond these existing results in three ways.
First and most importantly, we show how to include ADP-based thresholds in the matroid residual function and analyze feasible sets defined by the intersection of a matroid and a graph.
Second, we unify the two existing ADP results by abstracting them using a graph, which leads to a more general result---we can allow for items to use \textit{multiple} (up to $d$) resources, each for a \textit{different} duration.
Finally, we extend their guarantees to be relative to the prophet (instead of the optimal DP), and also show how they can be applied on combinatorial auctions (instead of assortment optimization).

\section{Proof of Theorem~\ref{thm::mr}} \label{sec::proofMR}

We first summarize and formalize the notation and definitions from the Introduction, for the basic setting in Theorem~\ref{thm::mr}.


\noindent\textbf{Online Selection Problem.} There is a ground set of agents $N=\{1,\ldots,T\}$ with valuations $V_1,\ldots,V_T$ drawn \textit{independently} from marginal distributions $F_1,\ldots,F_T$.
There is a matroid $M=(N,\cI)$ defined on the ground set, where $\cI$ is a collection of subsets of $N$ satisfying:
(i) $\emptyset\in\cI$;
(ii) if $S\in\cI$ and $S'\subseteq S$ then $S'\in\cI$; and
(iii) for $S,S'\in\cI$ with $|S|>|S'|$, there exists $t\in S\setminus S'$ such that $S'\cup\{t\}\in\cI$
(we refer to \cite{korte2006combinatorial} for more background on matroids and their use in optimization).
There is also a graph $G=(N,E)$ defined on $N$, where $E$ is a collection of size-2 subsets of $N$.
We let $\cF$ denote the collection of feasible sets, where a set of agents $S$ is feasible if it is both independent in the matroid (i.e.\ $S\in\cI$) and independent in the graph (i.e.\ $\{t,t'\}\notin E$ for all $t,t'\in S$).
The goal is to accept a max-value feasible set of agents as compared to a prophet.


\noindent\textbf{Prophet.}
The prophet chooses a joint valuation distribution over $\bR^T$ with marginals $F_1,\ldots,F_T$.
On every realization, he sees the valuations and then selects a feasible set of agents.
Let $\ssx_t$ denote the probability that agent $t$ is selected, and let $\ssy_t$ denote her expected valuation conditional on being selected.  Let $\OPT$ denote the prophet's expected welfare, which equals $\sum_{t=1}^T\ssy_t\ssx_t$, by the linearity of expectation.
Furthermore, since on every realization, the prophet's selection must be independent in both the matroid and the graph, the vector $\ssx$ satisfies
\begin{align}
\sum_{t\in S}\ssx_t &\le\rank(S) &\forall\ S\subseteq N \label{constr::matroid} \\
\sum_{t\in S}\ssx_t &\le\alpha(G[S]) &\forall\ S=\{t'<t:\{t,t'\}\in E\},\text{ for some $t\in N$} \label{constr::graph}
\end{align}
where $\rank(S)=\max_{S'\subseteq S,S'\in\cI}|S'|$ denotes the max-cardinality matroid independent set contained within $S$, and $\alpha(G[S])$ denotes the max-cardinality graph independent set contained within $S$.
Note that we have relaxed constraints~(\ref{constr::graph}) to only a specific family of sets $S$, which are sufficient for Theorem~\ref{thm::mr} and computationally simpler.


\noindent\textbf{Dynamic Programming Coefficients.}
Having defined $\ssx_t$ and $\ssy_t$, we compute $\pi_t=\sum_{t'>t:\{t,t'\}\in E}\ssx_{t'}[\ssy_{t'}-\pi_{t'}]^+$ by backward induction over $t=T,T-1,\ldots,1$, as in (\ref{eqn::defPi}).  We use $[\cdot]^+$ to denote the operator $\max\{\cdot,0\}$.

We note that such a backward-induction computation is only possible because we have assumed that the arrival order $1,\ldots,T$ is known in advance.


\noindent\textbf{Restricted Prophet.}
The restricted prophet sees valuations $\vhV=(\hV_1,\ldots,\hV_T)$ drawn according to a joint distribution $\hD$ defined as follows.
First, an independent set $\hI$ in the matroid (which need not be independent in the graph) is randomly selected in a way such that $\Pr[t\in\hI]=\ssx_t$ for all $t\in N$ (this is possible because $\ssx$ lies in the matroid polytope defined by (\ref{constr::matroid})---we elaborate on this in Section~\ref{sec::computation}).
The restricted prophet then sees $\hV_t=\ssy_t-\pi_t$ if $t\in\hI$, and $\hV_t=-\pi_t$ otherwise.
The residual function~(\ref{eqn::ourResidual}) based on the restricted prophet is $\hR(Y)=\bE_{\vhV\sim\hD}[\max_{S:S\cup Y\in\cI}\sum_{t\in S}\hV_t]$.  We note that if $Y\notin\cI$, then $\hR(Y)=-\infty$.

We note that $\hR(\emptyset)=\sum_{t=1}^T\ssx_t[\ssy_t-\pi_t]^+$.
This is because on every realization of $\vhV$, the optimal $S$ to take is the set of agents $t$ with $\hV_t>0$, which is guaranteed to be independent in the matroid (since all such agents must have had $\hV_t=\ssy_t-\pi_t$).
A corollary
is that if the graph is empty and
$\pi_t=0$ for all $t$, then the restricted prophet earns $\sum_{t=1}^T\ssy_t\ssx_t$, matching the welfare of the actual prophet despite seeing ``binarized'' valuations $\hV_t$.
This reduction was introduced in \cite{lee2018optimal}.


\noindent\textbf{Algorithm.}
The algorithm, having already accepted agents in $Y$, accepts an agent $t$ if and only if $Y\cup\{t\}$ is independent in the graph and $V_t\ge\tau(t|Y)+\pi_t$, as defined in (\ref{eqn::combinedThreshold}).
Note that $\tau(t|Y)=\frac{1}{d_1(M)+1}(\hR(Y)-\hR(Y\cup\{t\}))$, and we do not need to explicitly check that $Y\cup\{t\}$ is independent in the matroid, because if not, then $\tau(t|Y)=\infty$.
Let $\ALG$ denote the expected welfare of this algorithm.


We note that this algorithm requires computing the prophet's values of $\ssx_t,\ssy_t$, and evaluating expectations over the restricted prophet's correlated distribution $\hD$.
We ignore computational issues in this section and discuss how the algorithm can be implemented via an \textit{ex-ante relaxation} in Section~\ref{sec::computation}.

We now establish Theorem~\ref{thm::mr}, the conceptual result that the gap between an online algorithm and any prophet is at most $(d_1(M)+1)(d_2(G)+1)$ (with $d_1(M),d_2(G)$ as defined in Definition~\ref{def::d1}--\ref{def::d2}), via
a sequence of two lemmas.

\begin{lemma}  \label{lem::firstStep}
The algorithm earns at least $\frac{1}{d_1(M)+1}$ times the welfare of the restricted prophet.
That is, $\ALG\ge\frac{1}{d_1(M)+1}\hR(\emptyset)$.
\end{lemma}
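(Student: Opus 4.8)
The plan is to mimic the proof of the classical matroid prophet inequality of Kleinberg--Weinberg, but carried out against the \emph{restricted} prophet's valuations $\hV_t$ and with the algorithm's decision rule~(\ref{eqn::combinedThreshold}) in force. The key point to isolate is that, since $\tau(t\mid Y)=\frac{1}{d_1(M)+1}(\hR(Y)-\hR(Y\cup\{t\}))$ and the algorithm accepts $t$ exactly when $V_t\ge\tau(t\mid Y)+\pi_t$ (and $Y\cup\{t\}$ is graph-independent), the quantity $V_t-\pi_t$ plays the role of the ``surplus'' valuation against which the matroid thresholds are balanced. Because $\pi_t$ is static (it does not depend on $Y$), the telescoping argument over the run of the algorithm goes through with $\hV_t$ and $\tau(t\mid Y)$ in place of the usual valuations and residual differences.

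Concretely, first I would fix a sample path of the algorithm, let $Y^{fin}$ denote the final accepted set, and write the algorithm's welfare as $\sum_{t\in Y^{fin}}V_t$. Decompose each accepted $V_t$ as $V_t=(V_t-\tau(t\mid Y^{<t})-\pi_t)+\tau(t\mid Y^{<t})+\pi_t$ where $Y^{<t}$ is the algorithm's state just before agent $t$; the first bracket is nonnegative by the acceptance rule. Summing the $\tau$ terms telescopes: $\sum_{t\in Y^{fin}}\tau(t\mid Y^{<t})=\frac{1}{d_1(M)+1}\sum_{t\in Y^{fin}}(\hR(Y^{<t})-\hR(Y^{<t}\cup\{t\}))=\frac{1}{d_1(M)+1}(\hR(\emptyset)-\hR(Y^{fin}))$, and since $\hR\ge0$ on matroid-independent sets this is at least $\frac{1}{d_1(M)+1}\hR(\emptyset)-\frac{1}{d_1(M)+1}\hR(Y^{fin})$. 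Rearranging, $\ALG\ge\frac{1}{d_1(M)+1}\hR(\emptyset)$ would follow once we show that the discarded terms $\sum_{t\in Y^{fin}}(V_t-\tau(t\mid Y^{<t})-\pi_t)$ together with $\frac{1}{d_1(M)+1}\hR(Y^{fin})$ and the $\sum_{t\in Y^{fin}}\pi_t$ contributions are, in expectation, at least $\frac{d_1(M)}{d_1(M)+1}\hR(\emptyset)$. This is exactly where the exchange/charging argument enters.

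The main obstacle, as in Kleinberg--Weinberg, is the lower bound on $\hR(Y^{fin})$: one must show that, conditioned on the algorithm's final state $Y^{fin}$, the expected residual $\bE[\hR(Y^{fin})]$ is large relative to what the restricted prophet could still gain. Here I would use the matroid exchange property: given a fresh draw $\vhV\sim\hD$ with the restricted prophet's optimal independent set $\hat S$, since both $Y^{fin}$ and $\hat S$ are matroid-independent there is an injection $\phi$ from $\hat S\setminus Y^{fin}$ into $N\setminus Y^{fin}$ with $Y^{fin}\cup\{\phi(s)\}$ and $Y^{fin}\setminus\{\phi(s)\}\cup\{s\}$-type swaps independent; for each such $s$, either $\phi(s)$ was rejected by the algorithm (so $V_{\phi(s)}<\tau(\phi(s)\mid \cdot)+\pi_{\phi(s)}$, but more usefully $\hV_s$ could have been added to $Y^{fin}$, meaning $\hR(Y^{fin})\ge$ its marginal contribution) or $\phi(s)\in Y^{fin}$ already. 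Summing these marginal contributions over $\hat S$, controlling the $\pi$ terms via their definition~(\ref{eqn::defPi}) and the fact that agents in $Y^{fin}$ are graph-independent so their $\pi$-surpluses don't double-count, yields $\bE[\hR(Y^{fin})]\ge\hR(\emptyset)-(d_1(M)+1)\bE[\text{surplus terms}]$, which closes the induction. The delicate bookkeeping is making sure the $\pi_t$ prices, which were designed for the graph, do not interfere with the purely matroidal exchange; this works because $\pi_t$ shifts $\hV_t$ uniformly and appears symmetrically on both sides of~(\ref{eqn::combinedThreshold}), so it cancels in every comparison between an accepted and a would-be-accepted agent.
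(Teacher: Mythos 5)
Your opening decomposition and telescoping of the $\tau$-terms is exactly the paper's first step (equation~(\ref{eqn::algDecomp})), but after that the argument goes off track in two ways. First, the algebra of what remains to be shown is wrong: after telescoping, the needed inequality is $\bE\big[\sum_{t\in Y}(V_t-\tau(t|Y_{t-1}))\big]\ge\frac{1}{d_1(M)+1}\bE[\hR(Y)]$, i.e.\ you must \emph{upper}-bound the leftover residual $\hR(Y)$ of the final set by the algorithm's expected surplus. Your sketch instead aims at a \emph{lower} bound, $\bE[\hR(Y^{fin})]\ge\hR(\emptyset)-(d_1(M)+1)\bE[\text{surplus}]$; since $\hR(Y^{fin})$ enters the decomposition with a negative sign, a lower bound on it only caps $\ALG$ from above and cannot close the proof. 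The paper gets the correct direction by splitting the task into Proposition~\ref{prop::existing} (the Kleinberg--Weinberg submodularity argument applied to the matroid-only restricted residual, giving $\frac{1}{d_1(M)+1}\hR(Y)\le\sum_t\ssx_t[\ssy_t-\tau(t|Y_{t-1})-\pi_t]^+$) and Proposition~\ref{prop::key} (a lower bound on the algorithm's surplus by that same quantity).

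Second, and more fundamentally, your claim that the $\pi_t$ prices ``cancel in every comparison'' and do not interfere with the matroid exchange is precisely where the argument breaks. Under the graph constraint, an agent can be rejected even though $V_t\ge\tau(t|Y_{t-1})+\pi_t$, simply because an earlier graph-neighbor was accepted; in that case your exchange step (``either $\phi(s)$ was rejected, so its value was below the threshold, or $\phi(s)\in Y^{fin}$'') is false, and the blocked agent's contribution is charged nowhere. The paper's novel ingredient, which your sketch lacks, is exactly the accounting for these graph-blocked arrivals: in Proposition~\ref{prop::key} the collected prices $\sum_{t\in Y}\pi_t$ are rewritten (via the definition~(\ref{eqn::defPi}) and a swap of sums) as $\sum_{t'}\ssx_{t'}[\ssy_{t'}-\pi_{t'}]^+\big(1-\Feas^G(Y_{t'-1}\cup\{t'\})\big)$, so that for each $t$ either the Jensen bound $\bE_{V_t}\big[[V_t-\tau(t|Y_{t-1})-\pi_t]^+\big]\ge\ssx_t[\ssy_t-\tau(t|Y_{t-1})-\pi_t]^+$ applies (graph-feasible case) or the previously collected $\pi$-price of the blocking neighbor covers the term (graph-infeasible case). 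Without this case analysis, or some equivalent mechanism showing that the static prices pay for blocked future surpluses, the Kleinberg--Weinberg exchange argument alone does not yield Lemma~\ref{lem::firstStep}.
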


\begin{proof}
Let $Y$ denote the random set of agents accepted at the end of the algorithm, and for all $t=1,\ldots,T$, let $Y_t$ denote $Y\cap\{1,\ldots,t\}$, the set of agents accepted up to and including agent $t$.
The algorithm's expected welfare equals
\begin{align}
\ALG  &= \bE\left[\sum_{t \in Y} (V_t - \tau(t|Y_{t-1})) + \sum_{t \in Y}\tau(t|Y_{t-1})\right] \nonumber \\
&= \bE\left[\sum_{t \in Y} (V_t - \tau(t|Y_{t-1}))+ \frac{1}{d_1(M)+1} \sum_{t\in Y}(\hR(Y_{t-1})-\hR(Y_{t-1}\cup\{t\}))\right] \nonumber \\
&=  \frac{1}{d_1(M)+1} \hR(\emptyset)  + \bE\left[\sum_{t \in Y} (V_t - \tau(t|Y_{t-1}))\right] - \frac{1}{d_1(M)+1} \bE[\hR(Y)] \label{eqn::algDecomp}
\end{align}
where the second equality follows from the definition of $\tau$, and the third equality follows from the fact that $Y_{t-1}\cup\{t\}=Y_t$ for all $t\in Y$, causing the latter sum to telescope.

We now upper-bound the negative term from (\ref{eqn::algDecomp}) in the following proposition.
It mostly follows from existing results \cite{kleinberg2012matroid,lee2018optimal}, so its proof is deferred to the appendix.
It relies on the submodularity of the matroid residual function.
Since our restricted residual function considers a prophet who is only constrained by the matroid, our function is also submodular.
\begin{proposition} \label{prop::existing}
$\frac{1}{d_1(M)+1}\hR(Y)\le\sum_{t=1}^T\ssx_t[\ssy_t-\tau(t|Y_{t-1})-\pi_t]^+$.
\end{proposition}

In the meantime, we would like to lower-bound the second term from (\ref{eqn::algDecomp}), which represents the ``surplus'' earned by the algorithm beyond the thresholds $\tau(t|Y_{t-1})$, while being constrained by both matroid and graph independence.
The following proposition is novel and crucial to our analysis.

\begin{proposition} \label{prop::key}
$\bE\left[\sum_{t \in Y} (V_t - \tau(t|Y_{t-1}))\right]\ge\bE\left[\sum_{t=1}^T\ssx_t[\ssy_t-\tau(t|Y_{t-1})-\pi_t]^+\right]$.
\end{proposition}

\begin{proof}[of Proposition~\ref{prop::key}]
We decompose the LHS as $\bE\left[\sum_{t\in Y}(V_t-\tau(t|Y_{t-1})-\pi_t)\right]+\bE\left[\sum_{t\in Y}\pi_t\right]$ and analyze the two expectations separately.

The first expectation can be re-written as
\begin{align} \label{eqn::2378}
\bE\left[\sum_{t\in Y}(V_t-\tau(t|Y_{t-1})-\pi_t)\right]
&=\sum_{t=1}^T\bE_{Y_{t-1}}\Big[\bE_{V_t}[\bI(t\in Y)\cdot(V_t-\tau(t|Y_{t-1})-\pi_t)|Y_{t-1}]\Big]
\end{align}
after using both the linearity of expectation and the tower property of conditional expectation.
Now, recall that as agent $t$ arrives, she is accepted if and only if she is feasible (in both the matroid and graph), and $V_t-\tau(t|Y_{t-1})-\pi_t\ge0$.
If $Y_{t-1}\cup\{t\}$ does not form an independent set in the matroid, then $\tau(t|Y_{t-1})=\infty$, since $\hR(Y_{t-1}\cup\{t\})$ is understood to equal $-\infty$ when the maximization problem in the residual is infeasible.
Therefore, we can write
\begin{align*}
\bI(t\in Y)\cdot(V_t-\tau(t|Y_{t-1})-\pi_t)=\Feas^G(Y_{t-1}\cup\{t\})\cdot[V_t-\tau(t|Y_{t-1})-\pi_t]^+,
\end{align*}
where $\Feas^G(Y_{t-1}\cup\{t\})$ is the indicator random variable for $Y_{t-1}\cup\{t\}$ forming an independent set in the graph.
Making this substitution for every agent $t$ on the RHS of (\ref{eqn::2378}), we get that it equals
\begin{align} \label{eqn::firstTerm}
\sum_{t=1}^T\bE_{Y_{t-1}}\Big[\Feas^G(Y_{t-1}\cup\{t\})\cdot\bE_{V_t}\big[[V_t-\tau(t|Y_{t-1})-\pi_t]^+\big]\Big],
\end{align}
where we have used the fact that $V_t$ is independent from $Y_{t-1}$.

Meanwhile, the second expectation can be re-written as
\begin{align}
\bE\left[\sum_{t\in Y}\pi_t\right]
&=\bE\left[\sum_{t\in Y}\sum_{t'>t:\{t,t'\}\in E}\ssx_{t'}[\ssy_{t'}-\pi_{t'}]^+\right] \nonumber \\
&=\bE\left[\sum_{t'=1}^T\ssx_{t'}[\ssy_{t'}-\pi_{t'}]^+\Big(\sum_{t<t':\{t,t'\}\in E}\bI(t\in Y)\Big)\right] \nonumber \\
&=\bE\left[\sum_{t'=1}^T\ssx_{t'}[\ssy_{t'}-\pi_{t'}]^+(1-\Feas^G(Y_{t'-1}\cup\{t'\}))\right]
\label{eqn::secondTerm}
\end{align}
where the first equality applies the definition of $\pi_t$ from (\ref{eqn::defPi}),
the second equality switches sums,
and the third equality holds because agent $t'$ forms an independent set with $Y_{t'-1}$ in the graph if and only if none of its neighbors $t<t'$ have been accepted into $Y$.

Now, note that the sum of expressions \eqref{eqn::firstTerm} and \eqref{eqn::secondTerm} can be rewritten as
\begin{align*}
&\sum_{t=1}^T\bE_{Y_{t-1}}\Big[\Feas^G(Y_{t-1}\cup\{t\})\cdot\bE_{V_t}\big[[V_t-\tau(t|Y_{t-1})-\pi_t]^+\big]\Big] \\
&+\sum_{t=1}^T\bE_{Y_{t-1}}\Big[(1-\Feas^G(Y_{t-1}\cup\{t\}))\cdot\ssx_{t}[\ssy_{t}-\pi_{t}]^+\Big],
\end{align*}
which is lower-bounded by
\begin{align*}
\sum_{t=1}^T\bE_{Y_{t-1}}\left[\min\Big\{\bE_{V_t}\big[[V_t-\tau(t|Y_{t-1})-\pi_t]^+\big],\ssx_t\cdot[\ssy_t-\pi_t]^+\Big\}\right].
\end{align*}
We would like to further lower-bound both of the terms inside the $\min\{\cdot\}$ operator, by $\ssx_t\cdot[\ssy_t-\tau(t|Y_{t-1})-\pi_t]^+$.
For the second term, this is obvious, since the thresholds $\tau(t|Y_{t-1})$ are non-negative.
For the first term, note that $V_t$ takes an average value of $\ssy_t$ on an $\ssx_t$-fraction of sample paths.
Hence by Jensen's inequality, the expectation over $V_t$ is at least $\ssx_t[\ssy_t-\tau(t|Y_{t-1})-\pi_t]^+$ (the $[\cdot]^+$ operator is convex).

Returning to the fact that the sum of expressions \eqref{eqn::firstTerm} and \eqref{eqn::secondTerm} was the LHS of the original inequality, this completes the proof of Proposition~\ref{prop::key}. $\blacksquare$
\end{proof}

Equipped with Propositions~\ref{prop::existing}--\ref{prop::key}, the proof of Lemma~\ref{lem::firstStep} now follows from (\ref{eqn::algDecomp}).  Indeed, taking an expectation over $Y$ on both sides in the result of Proposition~\ref{prop::existing}, (\ref{eqn::algDecomp}) implies that $\ALG\ge\frac{1}{d_1(M)+1}\hR(\emptyset)$, which is the desired result. $\blacksquare$
\end{proof}

\begin{lemma} \label{lem::secondStep}
The restricted prophet earns at least $\frac{1}{d_2(G)+1}$ times the welfare of the actual prophet.
That is, $\hR(\emptyset)\ge\frac{1}{d_2(G)+1}\cdot\OPT$.
\end{lemma}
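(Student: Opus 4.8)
The plan is to reduce everything to the ex-ante quantities and argue purely by manipulating sums — no reference to the restricted prophet's correlated distribution $\hD$ is needed. Recall from the preceding discussion that $\OPT=\sum_{t=1}^T\ssx_t\ssy_t$ and $\hR(\emptyset)=\sum_{t=1}^T\ssx_t[\ssy_t-\pi_t]^+$, so the claim $\hR(\emptyset)\ge\frac{1}{d_2(G)+1}\OPT$ is equivalent to
\[
\sum_{t=1}^T\ssx_t\ssy_t\ \le\ (d_2(G)+1)\sum_{t=1}^T\ssx_t[\ssy_t-\pi_t]^+ .
\]
First I would observe that for every $t$, since $\pi_t\ge0$ and $[a]^+\ge a$ for all real $a$, we have $[\ssy_t-\pi_t]^++\pi_t\ge\ssy_t$; multiplying by $\ssx_t\ge0$ and summing over $t$ gives $\OPT\le\hR(\emptyset)+\sum_{t=1}^T\ssx_t\pi_t$. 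It then remains to show that the ``penalty'' term $\sum_{t=1}^T\ssx_t\pi_t$ is at most $d_2(G)\cdot\hR(\emptyset)$.

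For that, I would unfold $\pi_t$ using its defining recursion \eqref{eqn::defPi} and swap the order of summation:
\[
\sum_{t=1}^T\ssx_t\pi_t=\sum_{t=1}^T\ssx_t\!\!\!\sum_{t'>t:\{t,t'\}\in E}\!\!\!\ssx_{t'}[\ssy_{t'}-\pi_{t'}]^+=\sum_{t'=1}^T\ssx_{t'}[\ssy_{t'}-\pi_{t'}]^+\Big(\!\!\!\sum_{t<t':\{t,t'\}\in E}\!\!\!\ssx_t\Big).
\]
The inner sum ranges over exactly the vertex set $S=\{t<t':\{t,t'\}\in E\}$, which is one of the sets appearing in constraint~\eqref{constr::graph}; hence $\sum_{t<t':\{t,t'\}\in E}\ssx_t\le\alpha(G[S])\le d_2(G)$ by Definition~\ref{def::d2}. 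Plugging this in bounds the expression by $d_2(G)\sum_{t'=1}^T\ssx_{t'}[\ssy_{t'}-\pi_{t'}]^+=d_2(G)\cdot\hR(\emptyset)$, and combining with the first step gives $\OPT\le(d_2(G)+1)\hR(\emptyset)$, as desired.

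I do not expect a genuine obstacle here; the whole argument is a short LP-duality-style computation. The only place to be slightly careful is the first step, where I use $[\ssy_t-\pi_t]^+\ge\ssy_t-\pi_t$ rather than equality, so that the bound survives on coordinates where the surplus is truncated to zero. Conceptually, the content is that the backward recursion for $\pi_t$ charges each agent's surplus $[\ssy_t-\pi_t]^+$ to its earlier graph-neighbors, and the ex-ante relaxation~\eqref{constr::graph} of the graph-independence constraint caps the total $\ssx$-mass of those neighbors at $d_2(G)$ — which is precisely the loss factor incurred by the restricted prophet relative to the true prophet.
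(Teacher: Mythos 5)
Your proposal is correct and follows essentially the same route as the paper's proof: lower-bound $\hR(\emptyset)=\sum_t\ssx_t[\ssy_t-\pi_t]^+$ by $\OPT-\sum_t\ssx_t\pi_t$, unfold the recursion for $\pi_t$, swap the order of summation, and bound the inner sum $\sum_{t<t':\{t,t'\}\in E}\ssx_t$ by $\alpha(G[\{t<t':\{t,t'\}\in E\}])\le d_2(G)$ using constraint~(\ref{constr::graph}) and Definition~\ref{def::d2}, then rearrange. The only cosmetic difference is that you phrase it as bounding $\OPT$ from above rather than $\hR(\emptyset)$ from below, which is the same computation.
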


\begin{proof}
Recall that $\hR(\emptyset)=\sum_{t=1}^T\ssx_t[\ssy_t-\pi_t]^+$.
We apply the definition of $\pi_t$ from (\ref{eqn::defPi}) and switch sums to derive that
\begin{align*}
\sum_{t=1}^T\ssx_t[\ssy_t-\pi_t]^+
&\ge\sum_{t=1}^T\ssx_t\ssy_t-\sum_{t=1}^T\ssx_t\pi_t \\
&=\sum_{t=1}^T\ssx_t\ssy_t-\sum_{t=1}^T\ssx_t\sum_{t'>t:\{t,t'\}\in E}\ssx_{t'}[\ssy_{t'}-\pi_{t'}]^+ \\
&=\sum_{t=1}^T\ssx_t\ssy_t-\sum_{t'=1}^T\ssx_{t'}[\ssy_{t'}-\pi_{t'}]^+\Big(\sum_{t<t':\{t,t'\}\in E}\ssx_t\Big).
\end{align*}
Now, using the fact that the prophet's values of $\ssx_t$ satisfy (\ref{constr::graph}),
the sum in parentheses is upper-bounded by $\alpha(G[\{t<t':\{t,t'\}\in E\}])$ for all $t'=1,\ldots,T$.
By Definition~\ref{def::d2}, all of these values are upper-bounded by $d_2(G)$.  Therefore, since $\ssx_{t'}[\ssy_{t'}-\pi_{t'}]^+\ge0$, we have that
\begin{align*}
\sum_{t=1}^T\ssx_t[\ssy_t-\pi_t]^+\ge\sum_{t=1}^T\ssx_t\ssy_t-d_2(G)\sum_{t'=1}^T\ssx_{t'}[\ssy_{t'}-\pi_{t'}]^+,
\end{align*}
and rearranging yields $\sum_{t=1}^T\ssx_t[\ssy_t-\pi_t]^+\ge\frac{1}{d_2(G)+1}\sum_{t=1}^T\ssx_t\ssy_t=\frac{\OPT}{d_2(G)+1}$. $\blacksquare$
\end{proof}

\subsection{Computing and Constructing the Prophet's Distribution via an Ex-ante Relaxation} \label{sec::computation}

In this section we establish computational efficiency
assuming that the graph $G$ is induced by $d$-dimensional interval-scheduling constraints.
We use an \textit{ex-ante} relaxation defined by an LP, and establish three facts:
\begin{enumerate}
\item Theorem~\ref{thm::mr} still holds if we replace the prophet with this ex-ante relaxation, resulting in a guarantee of $\frac{1}{(d_1(M)+1)(d+1)}$ (because $d_2(G)\le d$); \label{fact::1}
\item Our algorithm based on this ex-ante relaxation is computationally efficient; \label{fact::2}
\item The ex-ante relaxation upper-bounds the welfare of any prophet. \label{fact::3}
\end{enumerate}
These facts together show that a computationally-efficient algorithm can earn at least $\frac{1}{(d_1(M)+1)(d+1)}$ times the welfare of any prophet.


\begin{definition}[$d$-dimensional Interval-scheduling Constraints] \label{def::dDimIntervalConstraints}
The agents arrive at times $1,\ldots,T$ to be served by $J$ different resources.
Each agent $t$ requests the attention of up to $d$ resources, for different durations of time starting from $t$.
Formally, associated with agent $t$ are intervals $\{\bbI^j_t=[t,u^j_t]:j\in U_t\}$, where $U_t$ is a set of at most $d$ resources, with $u^j_t\ge t$ for all $j\in U_t$.

An agent $t$ can be served only if all of the resources in $U_t$ are available.
Thus, a set of agents is feasible only if their requested intervals are disjoint for \textbf{every} resource.
Agents $t,t'$ are adjacent in the graph if $\bbI^j_t\cap\bbI^j_{t'}\neq\emptyset$ for \textbf{any} $j$.
\end{definition}

\begin{definition}[Discrete Valuations]
We assume that the marginal valuations are input as discrete distributions.
That is, they are supported over a finite set of $K$ values $v^1,\ldots,v^K\in\bR$, and for each agent $t$, we let $p^k_t\ge0$ denote the probability that $V_t=v^k$ for every $k=1,\ldots,K$, with $\sum_kp^k_t=1$.
\end{definition}

\begin{definition}[Ex-ante Relaxation]
The ex-ante relaxation is defined by the following LP.
\begin{align}
\max\sum_{t=1}^T\sum_{k=1}^Kv^kx_{tk} \nonumber \\
\sum_{t\in S}\Big(\sum_{k=1}^Kx_{tk}\Big) &\le\rank(S) &\forall\ S\subseteq N \label{lp::matroid} \\
\sum_{t'\le t:j\in U_{t'},\bbI^j_{t'}\ni t}\Big(\sum_{k=1}^Kx_{t'k}\Big) &\le1 &\forall\ t\in N;j\in U_t \label{lp::interval} \\
0\le x_{tk} &\le p^k_t &\forall\ t\in N;k=1,\ldots,K \nonumber
\end{align}
We then consider the values of $x_{tk}$ from an optimal LP solution and define
\begin{align*}
\ssx_t=\sum_{k=1}^Kx_{tk},\ \ssy_t=\frac{1}{\ssx_t}\sum_{k=1}^Kv^kx_{tk}
\end{align*}
for all agents $t$ (where $\ssy_t=0$ if $\ssx_t=0$).
\end{definition}
In the LP, variable $x_{tk}$ can be interpreted as the probability that agent $t$ has valuation $v^k$ and is accepted into the feasible set.  Note that in an optimal solution, $\ssy_t$ will equal the average value of $V_t$ on its top $\ssx_t$ quantile.

We now formalize the three facts stated above, with
the proofs deferred to the appendix.
The second fact references classical results in combinatorial optimization about separation \cite{grotschel1981ellipsoid} and rounding \cite{cunningham1984testing} for the matroid polytope.

\begin{proposition}[fact~\ref{fact::1} from above] \label{prop::factOne}
Our algorithm, when defined based on the values of $\ssx_t,\ssy_t$ from the ex-ante relaxation, has welfare at least $\frac{1}{(d_1(M)+1)(d+1)}\sum_{t=1}^T\ssy_t\ssx_t$.
\end{proposition}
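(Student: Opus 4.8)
The plan is to show that the entire argument of Section~\ref{sec::proofMR} carries over when $\ssx_t,\ssy_t$ are taken from an optimal solution of the ex-ante LP instead of from an actual prophet, with the sole change that the graph bound $d_2(G)$ is replaced by $d$ in Lemma~\ref{lem::secondStep}. Concretely, I would re-examine the two lemmas in turn.

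First, Lemma~\ref{lem::firstStep} ($\ALG\ge\frac{1}{d_1(M)+1}\hR(\emptyset)$) uses only three properties of $\ssx,\ssy$: (a) $\ssx$ lies in the matroid polytope, so that the restricted prophet's correlated distribution $\hD$ (hence $\hR$) is well-defined and submodular and Proposition~\ref{prop::existing} holds; (b) $\pi_t$ is defined from $\ssx,\ssy$ via the backward recursion~(\ref{eqn::defPi}), which it is by construction; and (c) in the Jensen step of Proposition~\ref{prop::key}, for each agent $t$ there is an event of probability $\ssx_t$ under $F_t$ on which $V_t$ averages to $\ssy_t$, so that $\bE_{V_t}\big[[V_t-c]^+\big]\ge\ssx_t[\ssy_t-c]^+$ for any $c$. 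Property (a) holds because constraints~(\ref{lp::matroid}) are exactly~(\ref{constr::matroid}). Property (c) holds because $0\le x_{tk}\le p^k_t$ allows us to carve out, inside each event $\{V_t=v^k\}$, a sub-event of probability $x_{tk}$; the union of these sub-events has probability $\sum_k x_{tk}=\ssx_t$ and conditional mean $\tfrac{1}{\ssx_t}\sum_k v^kx_{tk}=\ssy_t$. Hence the proof of Lemma~\ref{lem::firstStep} applies without change, giving $\ALG\ge\frac{1}{d_1(M)+1}\hR(\emptyset)$, and still $\hR(\emptyset)=\sum_t\ssx_t[\ssy_t-\pi_t]^+$.

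Second, I would redo Lemma~\ref{lem::secondStep}, replacing $\OPT$ by the LP objective $\sum_t\ssy_t\ssx_t$ and $d_2(G)$ by $d$. The same manipulations — expand $\pi_t$ using~(\ref{eqn::defPi}) and switch the order of summation — reduce the claim to bounding $\sum_{t<t':\{t,t'\}\in E}\ssx_t$ for each fixed $t'$. Here the interval structure enters: if $t<t'$ and $\{t,t'\}\in E$, then by Definition~\ref{def::dDimIntervalConstraints} there is a common resource $j\in U_t\cap U_{t'}$ with $\bbI^j_t\cap\bbI^j_{t'}\neq\emptyset$, and since $\bbI^j_t=[t,u^j_t]$ with $t<t'$ this forces $u^j_t\ge t'$, i.e.\ $t'\in\bbI^j_t$. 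Therefore $\{t<t':\{t,t'\}\in E\}\subseteq\bigcup_{j\in U_{t'}}\{t<t':j\in U_t,\ \bbI^j_t\ni t'\}$, and summing constraint~(\ref{lp::interval}) (instantiated at agent $t'$ and resource $j$) over the at most $d$ resources $j\in U_{t'}$ gives $\sum_{t<t':\{t,t'\}\in E}\ssx_t\le|U_{t'}|\le d$. Plugging this bound in and rearranging exactly as in Lemma~\ref{lem::secondStep} yields $\hR(\emptyset)\ge\frac{1}{d+1}\sum_t\ssy_t\ssx_t$. Chaining the two displays gives $\ALG\ge\frac{1}{d_1(M)+1}\hR(\emptyset)\ge\frac{1}{(d_1(M)+1)(d+1)}\sum_t\ssy_t\ssx_t$, which is the claim.

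The step I would check most carefully is property (c): unlike a genuine prophet, the LP solution does not carry a joint distribution over the $V_t$, so one has to argue directly from $x_{tk}\le p^k_t$ that $\ssy_t$ is realizable as a conditional mean of $V_t$ over an event of mass $\ssx_t$. Everything else is a faithful transcription of Section~\ref{sec::proofMR}, with the $d$-dimensional interval bound $\le d$ substituted for the graph bound $\le d_2(G)$; the only mild subtlety there is that an early neighbor of $t'$ adjacent via several resources is over-counted when we sum over $j\in U_{t'}$, which is harmless for an upper bound.
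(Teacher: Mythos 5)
Your proposal is correct and takes essentially the same approach as the paper's proof: you verify that the constraints~(\ref{constr::matroid})--(\ref{constr::graph}) and the Jensen-step property of $\ssy_t$ needed in Section~\ref{sec::proofMR} hold for the LP values, using the interval constraints~(\ref{lp::interval}) summed over the at most $d$ resources of the later agent to replace the bound $d_2(G)$ by $d$ in Lemma~\ref{lem::secondStep}. The only (harmless) difference is that you obtain the Jensen property directly from feasibility $x_{tk}\le p^k_t$, whereas the paper invokes LP optimality to note that $\ssy_t$ is the average of $V_t$ on its top $\ssx_t$ quantile.
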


\begin{proposition}[fact~\ref{fact::2} from above] \label{prop::factTwo}
Assuming oracle access to the matroid rank function, the values of $\ssx_t,\ssy_t$ from the ex-ante relaxation can be efficiently computed.  Furthermore, the restricted prophet's correlated distribution $\hD$ has a compact representation which can be efficiently computed.
\end{proposition}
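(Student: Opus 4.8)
The plan is to prove the two claims in turn; in both, the only genuine difficulty is that the matroid is given only through a rank oracle, so its polytope has exponentially many facets. For the first claim, I would solve the ex-ante LP with the ellipsoid method. Constraints~(\ref{lp::interval}) and the box constraints $0\le x_{tk}\le p^k_t$ are polynomially many and trivially checked, so it suffices to separate over~(\ref{lp::matroid}): given a candidate $x$, set $\ssx_t=\sum_k x_{tk}$, and note that~(\ref{lp::matroid}) holds if and only if $\min_{S\subseteq N}\big(\rank(S)-\sum_{t\in S}\ssx_t\big)\ge0$. Since $S\mapsto\rank(S)-\sum_{t\in S}\ssx_t$ is submodular, this minimum --- and a minimizing, hence violated, set when it is negative --- can be computed in polynomial time from the rank oracle \cite{grotschel1981ellipsoid}. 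Solving the LP then yields an optimal $x$, from which one reads off $\ssx_t=\sum_k x_{tk}$ and $\ssy_t=\frac{1}{\ssx_t}\sum_k v^k x_{tk}$.

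For the second claim, recall that $\hD$ is fully determined by a distribution over matroid-independent sets $\hI$ with $\Pr[t\in\hI]=\ssx_t$ together with the coefficients $\pi_t$, because conditional on $\hI$ the vector $\hV$ is deterministic ($\hV_t=\ssy_t-\pi_t$ if $t\in\hI$ and $\hV_t=-\pi_t$ otherwise). Since $\ssx$ satisfies~(\ref{lp::matroid}), it lies in the matroid independence polytope, so by Carath\'{e}odory's theorem it is a convex combination of at most $T+1$ vertices of that polytope, each the indicator of an independent set; and this decomposition can be produced in polynomial time by the matroid-polytope rounding routine of \cite{cunningham1984testing}. The $\pi_t$ come from $\ssx$, $\ssy$, and the graph $G$ (built explicitly from the interval data of Definition~\ref{def::dDimIntervalConstraints}) via the backward recursion~(\ref{eqn::defPi}), which has polynomially many terms. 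Hence $\hD$ admits the claimed compact representation: a distribution over at most $T+1$ explicitly listed valuation vectors with known probabilities.

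I would then note that this representation makes the entire algorithm efficient, which is what fact~\ref{fact::2} asserts. For any $Y\in\cI$, $\hR(Y)=\sum_i q_i\cdot\max\{\sum_{t\in S}\hV^{(i)}_t:S\cup Y\in\cI\}$ over the at most $T+1$ scenarios $(\hI^{(i)},q_i)$, and each inner maximization is a weighted matroid problem (take all positively-weighted elements already compatible with $Y$, then greedily add a maximum-weight independent set of the remaining positively-weighted elements in $M/Y$), solved by the greedy algorithm with the rank oracle; so $\hR$, every threshold $\tau(t|Y)=\frac{1}{d_1(M)+1}(\hR(Y)-\hR(Y\cup\{t\}))$, and the decision rule~(\ref{eqn::combinedThreshold}) are polynomial-time computable. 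The main obstacle, common to both parts, is precisely this exponential implicit description of the matroid --- separating~(\ref{lp::matroid}) for the LP, and decomposing $\ssx$ into independent sets for $\hD$ --- and in both places it is dispatched by the classical matroid-optimization results cited above; the interval constraints, the $\pi_t$ recursion, and the greedy evaluations of $\hR$ are all routine.
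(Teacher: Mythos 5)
Your proposal is correct and follows essentially the same route as the paper's proof: separate over the matroid-polytope constraints~(\ref{lp::matroid}) via submodular minimization with the rank oracle and solve the LP by the GLS ellipsoid method, then use integrality of the polytope, Carath\'{e}odory, and Cunningham's rounding procedure to write $\ssx$ as a convex combination of at most $T+1$ independent sets, giving the compact representation of $\hD$. Your additional remark on evaluating $\hR(Y)$ scenario-by-scenario with the matroid greedy algorithm just makes explicit a step the paper leaves implicit.
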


\begin{proposition}[fact~\ref{fact::3} from above] \label{prop::factThree}
The expected welfare of any prophet, who can choose the correlation between $V_1,\ldots,V_T$ and select a feasible $S\in\cF$ maximizing $\sum_{t\in S}V_t$ on every realization, is upper-bounded by the optimal LP value of $\sum_{t=1}^T\ssy_t\ssx_t$.
\end{proposition}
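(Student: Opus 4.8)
The plan is to show that \emph{any} prophet strategy can be ``read off'' as a feasible solution of the ex-ante LP of matching objective value; since the LP is a maximization problem, this immediately yields the claimed upper bound. A prophet strategy consists of a joint distribution over $(V_1,\ldots,V_T)$ with the prescribed marginals $F_1,\ldots,F_T$, together with a (possibly randomized) rule that, on every realization, picks a feasible set $A\in\cF$. I would define, for each agent $t$ and value index $k$,
\[
x_{tk}=\Pr\big[V_t=v^k\ \text{and}\ t\in A\big],
\]
and then verify that $(x_{tk})$ is LP-feasible and that its objective $\sum_{t}\sum_{k}v^kx_{tk}=\sum_{t}\bE\big[V_t\,\bI(t\in A)\big]$ equals the prophet's expected welfare by linearity of expectation.

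Checking feasibility splits into three pieces. The box constraints $0\le x_{tk}\le p^k_t$ hold because $x_{tk}$ is the probability of an event contained in $\{V_t=v^k\}$. For the matroid constraints~(\ref{lp::matroid}), I would use $\sum_{k}x_{tk}=\Pr[t\in A]=\ssx_t$, so that $\sum_{t\in S}\sum_{k}x_{tk}=\bE\big[|S\cap A|\big]$, which is at most $\rank(S)$ because $A$ is independent in the matroid on every realization. For the interval constraints~(\ref{lp::interval}), I would fix $t\in N$ and $j\in U_t$ and observe that every agent $t'$ appearing in that constraint has its resource-$j$ interval $\bbI^j_{t'}$ containing the integer point $t$; hence any two such agents overlap on resource $j$, so are adjacent in $G$, and $A$ --- being independent in the graph --- can contain at most one of them on every realization. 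Taking expectations then gives the right-hand side of $1$.

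I expect the only point requiring care --- rather than a genuine obstacle --- is justifying why it suffices for the LP to impose interval constraints only at the integer points $t\in N$ and only for resources $j\in U_t$: because each resource-$j$ interval starts at an integer arrival time, any overlap on resource $j$ between an earlier agent and a later agent $t$ forces the arrival time $t$ of the later agent to lie in the interval of the earlier one, so every possible conflict is detected by one of the constraints in~(\ref{lp::interval}). (This parallels the fact, already used in the proof of Theorem~\ref{thm::mr}, that it is enough to impose the graph constraints~(\ref{constr::graph}) for the specific sets $\{t'<t:\{t,t'\}\in E\}$.) Finally, I would note that this argument nowhere uses the optimality of the prophet's selection rule or the independence of $V_1,\ldots,V_T$, so the bound holds verbatim for the ex-ante prophet, completing fact~\ref{fact::3}.
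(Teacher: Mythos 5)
Your proposal is correct and follows essentially the same route as the paper: read the prophet off as a feasible LP solution $x_{tk}=\Pr[V_t=v^k,\ t\in A]$, check the box, matroid, and interval constraints via linearity of expectation (your observation that all agents in an interval constraint pairwise overlap at the point $t$ is exactly the paper's parenthetical justification), and match objective values. Your closing digression about whether the constraints~(\ref{lp::interval}) detect \emph{every} conflict is unnecessary for this direction---a relaxation with fewer constraints only makes the upper bound easier---but it is harmless (and is the fact actually needed in Proposition~\ref{prop::factOne}).
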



\section{Generalization to XOS Combinatorial Auctions} \label{sec::generalizations}

We generalize our result to a setting where each agent has a random valuation function over a set of items, and the graph and matroid constraints are defined on the items. An agent's valuation function is realized upon arrival, and the set of items allocated to the agent must then be decided.
Specifically, there are $T$ agents and a set of items $N$.
There is a matroid $(N, \cI)$ and a graph $(N, E)$ defined over these items, and the total set of items allocated must be both independent in the matroid and the graph.

We assume that $N$ is partitioned into sets $N_1, \dots, N_T$ such that agent $t$ can only be given items in $N_t$.
This does not lose generality, in that we can always replace an item in $N$ with a clique in the graph over $N_1\cup\ldots\cup N_T$, whose vertices are all adjacent to the same neighbors.  The copies of the item forming a clique ensures that at most one of them can be taken, while not increasing the constant $d_2(G)$ for the graph (because they all block each other).



An allocation is denoted by $(Y_1, \dots, Y_T) \subseteq \prod_{t=1}^T 2^{N_t}$, and let $\cY_t = (Y_1, \dots, Y_t)$. Sometimes we abuse notation to assume $\cY_t = Y_1 \cup \dots \cup Y_t$. The set of feasible allocations is
\[
\cF = \{(Y_1, \dots, Y_T) \subseteq \prod_t 2^{N_t}: Y_1 \cup \dots \cup Y_T \text{ independent in $M$ and $G$}\}.
\]

We have similar definitions of $d_1(M)$ and $d_2(M)$ as before. $d_1(M)=0$ if $\cI=2^N$, and $d_1(M)=1$ otherwise.
$d_2(G)=\max_{i\in N_t: t \in [T]}\alpha\big(G\big[\{i' \in N_{t'}:\{i,i'\}\in E, t' < t\}\big]\big)$,
where $G[\cdot]$ denotes the subgraph of $G$ induced by a set of vertices, and $\alpha(\cdot)$ denotes the maximum size of an independent set in a graph.

Each agent $t$ has a random valuation function $v_t: 2^{N_t} \rightarrow \bR_{\geq0}$ drawn from a known distribution. At time $t$, agent $t$'s valuation function realizes independently to a valuation function $v^k_t$ with probability $p_t^k$ for $k=1,\dots,K$. Then, the set of items $Y_t \subseteq N_t$ allocated to the agent is decided.
We require the all valuation functions to be fractionally-subadditive, i.e.\ XOS.
\begin{definition}
  For a set of items $N_t$, a valuation function $v_t:2^{N_t} \rightarrow \bR$ is \textit{XOS}, or fractionally-subadditive, if it can be written as a maximum of a collection of additive functions: $$v_t(S)=\max_{\ell\in[L]}v_{\ell}(S) \quad\;\forall\ S\subseteq N_t,$$
  where each $v_{\ell}$ is an additive\footnote{A valuation function $v$ is \textit{additive} if $v(S)+v(S') = v(S\cup S')$ for all disjoint subsets of items $S$ and $S'$.} valuation function.
\end{definition}

We now formalize the prophet and algorithm.

\noindent\textbf{Prophet.}
As before, we compare our online algorithm to a prophet who is able to choose an arbitrary correlated distribution $\ssD$ over $v_1,\ldots,v_T$.
Furthermore, for every realization of the valuation functions $\cv = (v_1, \dots, v_T)$, we define the prophet's allocation to be $\proph(\cv)=(\proph(\cv)_1,\ldots,\proph(\cv)_T)$, which always satisfies
\[
\proph(\cv) \in \argmax_{(Y_1, \dots, Y_T) \in \cF}\sumt v_t(Y_t).
\]
Let $x_{t}^{k*}(S) = \Pr_{\cv \sim \ssD}(\proph(\cv)_t = S | v_t = v_t^k)$ be the conditional probability that the prophet chooses to allocate $S \subseteq N_t$ to agent $t$ given that $v_t$ realizes to $v_{t}^k$. Then, by linearity of expectation, the prophet's value is
\begin{align}
  \OPT =
\sumt \sumk p_t^k \sumS x_t^{k*}(S) v_t^k(S). \label{eq::OPT::XOS}
\end{align}

Since the prophet's allocation must be independent in the graph on every realization, for every $S = S_1\cup  \dots \cup S_T \subseteq N$, the following must be satisfied:
\begin{align*}
\sumt \sum_{J_t \subseteq N_t} \bI[\text{$J_t$ allocated}] \sum_{i \in J_t} \bI[i \in S_t] \leq \alpha(G[S]).
\end{align*}
That is, the number of items in $S$ that is taken must not be greater than the size of the biggest independent set in the graph induced by $S$.
By taking expectations, we get:
\begin{align}
\sumt \sumk p_t^k \sum_{J_t \subseteq N_t} x_t^{k*}(J_t) \sum_{i \in J_t} \bI[i \in S_t] \leq \alpha(G[S]). \label{XOS::graphrank}
\end{align}


\noindent\textbf{XOS Decomposition.}
For $S \subseteq N_t$, consider the valuation $v_t^k(S)$. Since $v_t^k$ is XOS, it can be written as $v_t^k(S) = \max_{\ell \in [L]} w_\ell(S)$, where each $w_\ell: 2^{N_t} \rightarrow \bR$ is an additive function.
Let $u_t^k(i, S)$ be the value of item $i$ for the additive function that \textit{supports} $v_t^k(S)$. That is, if $v_t^k(S) = w_\ell(S)$, then let $u_t^k(i, S) = w_\ell(\{i\})$ for all $i \in S$.

The following property holds for any XOS function $v_t^k$:
\begin{align} \label{eq::XOSproperty}
v_t^k(J) \geq \sum_{i \in J} u_t^k(i, S) \quad \forall J \subseteq S \subseteq N_t,
\end{align}
since the additive function that supports $S$ has exactly the value $\sum_{i \in J} u_t^k(i, S)$, so $v_t^k(J)$ can only be higher.  This property is also used in \cite{feldman2014combinatorial}.


\noindent\textbf{Dynamic Programming Coefficients.}
Define $\pi_t(i)$ and $\hu_t^k(i,S)$ recursively using backwards induction over all $t=T, T-1, \dots, 1$, $S\subseteq N_t$, and $i \in N_t$:
\begin{align}
  \pi_t(i) &= \sum_{t' > t} \sum_{k=1}^K p_{t'}^k \sum_{S'\subseteq N_{t'}} x_{t'}^{k*}(S') \sum_{i' \in S'} \hu_{t'}^k(i', S') \bI(\{i,i'\} \in E). \label{eqn::defPi::XOS} \\
  \hu_t^k(i, S) &= [u_t^k(i, S) - \pi_t(i)]^+. \nonumber
\end{align}
Analogous to \eqref{eqn::defPi}, $\pi_t(i)$ is the ``cost" of allocating item $i$ at time $t$. We sum over all future surpluses $\hu^k_{t'}(i',S')$ that are ``blocked" by $G$ if item $i$ is taken, where the future surpluses are pre-computed and depend not only on $i'$ but \textit{also} on $S'$.
This is because we need to separately account for the sets $S'$ which could be allocated to agent $t'$, for each possible realization of valuation function $v_{t'}$.


\noindent\textbf{Restricted Prophet.}
For $S \subseteq N$, we define the \textit{restricted residual} of $S$ as:
\begin{align*}
\hR(S) =  \bE_{\cv\sim\ssD}[\max_{\substack{\cJ \subseteq \proph(\cv) \\ S \cup \cJ \in \cI }} \sumt \sum_{i \in J_t} \hu_t(i, \proph(\cv)_t)].
\end{align*}
where $\cJ = (J_1, \dots, J_T)$, and $\cJ \subseteq \proph(\cv)$ means $J_t \subseteq \proph(\cv)_t \; \forall t$.
That is, from the prophet solution $\proph(\cv)$, we take the subset of it which is feasible with $S$ and maximizes the restricted rewards.
If $S \notin \cI$, $\hR(S)$ is defined as $-\infty$.
If $S = \emptyset$, then setting $\cJ=\proph(\cv)$ is a maximizer for every $\cv$ (because $\proph(\cv)\in\cI$), and hence
$\hR(\emptyset)$ equals $\sumt \sumk p_t^k \sumS x_t^{k*}(S)\sum_{i \in S} \hu_t^k(i, S)$, which we refer to as the
value of the \textit{restricted prophet}.


\noindent\textbf{Algorithm.}
Let $\tau(S | \cY_{t-1}) = \frac{1}{d_1(M)+1}(R(\cY_{t-1}) - R(\cY_{t-1} \cup S))$ denote the threshold for the matroid at time $t$, when the set of items taken so far is $\cY_{t-1}$. Note that $\tau(S | \cY_{t-1}) = \infty$ if $\cY_{t-1} \cup S \notin \cI$.
At time $t$, if the agent's valuation realizes to $v_t^k$, the algorithm allocates subset
\begin{align}
Y_t \in \argmax_{\substack{S \subseteq N_t: \\ \text{$S$ feasible} }} \big(v_t^k(S) - \sum_{i \in S} \pi_t(i) - \tau(S | \cY_{t-1})\big). \label{alg::XOS}
\end{align}
That is, the algorithm allocates the feasible subset which has the largest ``surplus" over the sum of the two thresholds.

We are now ready to state our generalization of Theorem~\ref{thm::mr}.

\begin{theorem} \label{thm::XOS}
Let $N_1, \dots, N_T$ be a collection of disjoint items, and let $N = N_1 \cup \dots \cup N_T$.
Let $M = (N, \cI)$ be a matroid and $G = (N, E)$ be a graph.
If every realization of the random valuation function $v_t:2^{N_t} \rightarrow \bR$ is XOS for all $t \in [T]$, then the expected welfare of an online algorithm is at least
\begin{align*}
\frac{1}{(d_1(M)+1)(d_2(G)+1)}\cdot\OPT,
\end{align*}
where $\OPT$ is the expected welfare of a prophet who can choose the correlation between $v_1,\ldots,v_T$ and see their realizations beforehand.
\end{theorem}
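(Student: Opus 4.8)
The plan is to mirror the two-step structure used for Theorem~\ref{thm::mr}, replacing scalar valuations by the XOS-decomposed item surpluses $\hu_t^k(i,S)$. Concretely, I would prove an analogue of Lemma~\ref{lem::firstStep} stating $\ALG\ge\frac{1}{d_1(M)+1}\hR(\emptyset)$, and an analogue of Lemma~\ref{lem::secondStep} stating $\hR(\emptyset)\ge\frac{1}{d_2(G)+1}\OPT$, and then compose them. For the second step, recall $\hR(\emptyset)=\sumt\sumk p_t^k\sumS x_t^{k*}(S)\sum_{i\in S}\hu_t^k(i,S)$. Using the recursive definition~\eqref{eqn::defPi::XOS} and $\hu_t^k(i,S)=[u_t^k(i,S)-\pi_t(i)]^+\ge u_t^k(i,S)-\pi_t(i)$, I would lower-bound $\hR(\emptyset)$ by $\sumt\sumk p_t^k\sumS x_t^{k*}(S)\sum_{i\in S}u_t^k(i,S)$ minus a term $\sum_i (\text{taken mass on }i)\cdot\pi_t(i)$; switching the order of summation turns the subtracted term into $\sum_{i'}(\text{surplus on }i')\cdot(\sum_{i<i',\{i,i'\}\in E}\text{mass on }i)$, and the inner sum is bounded by $d_2(G)$ via constraint~\eqref{XOS::graphrank} applied to the set $\{i<i':\{i,i'\}\in E\}$ together with Definition of $d_2(G)$. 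The first sum, $\sumt\sumk p_t^k\sumS x_t^{k*}(S)\sum_{i\in S}u_t^k(i,S)$, equals $\OPT$ exactly, since by definition $\sum_{i\in S}u_t^k(i,S)=v_t^k(S)$ when the supporting additive function is the one achieving the max. Rearranging then gives $\hR(\emptyset)\ge\frac{1}{d_2(G)+1}\OPT$, exactly as in Lemma~\ref{lem::secondStep}.

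For the first step, I would follow the decomposition in~\eqref{eqn::algDecomp}: writing $Y_t$ for the items allocated through time $t$ and $\cY_t$ for the cumulative allocation, expand $\ALG=\bE[\sumt(v_t^k(Y_t)-\tau(Y_t\mid\cY_{t-1}))+\sumt\tau(Y_t\mid\cY_{t-1})]$, use the definition of $\tau$ and telescoping of $\hR$ to get $\ALG=\frac{1}{d_1(M)+1}\hR(\emptyset)+\bE[\text{surplus term}]-\frac{1}{d_1(M)+1}\bE[\hR(\cY_T)]$. Then I need the two analogues of Propositions~\ref{prop::existing} and~\ref{prop::key}: that $\frac{1}{d_1(M)+1}\hR(\cY_{t-1})\le\sum_{t,k,S,i}(\text{prophet mass})\cdot[\hu_t^k(i,S)-\text{per-item threshold}]^+$ (from submodularity of the matroid residual, as in~\cite{kleinberg2012matroid,lee2018optimal}), and that the algorithm's surplus beyond thresholds dominates the same quantity. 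The latter is the crux. It requires splitting the surplus into a matroid-and-graph-threshold part and a $\sum_{i\in Y_t}\pi_t(i)$ part, rewriting the $\pi$ part via~\eqref{eqn::defPi::XOS} into a sum over future blocked surpluses weighted by $(1-\Feas^G(\cdot))$ indicators, and then using the min/Jensen argument exactly as in Proposition~\ref{prop::key}. The subtlety specific to XOS is that the algorithm allocates $Y_t$ maximizing $v_t^k(S)-\sum_{i\in S}\pi_t(i)-\tau(S\mid\cY_{t-1})$ over \emph{feasible} $S$; to compare it against the prophet's would-be allocation of some $S$ to agent $t$, I would invoke property~\eqref{eq::XOSproperty}: for the prophet's set $S$ and any feasible subset $J\subseteq S$, $v_t^k(J)\ge\sum_{i\in J}u_t^k(i,S)$, so the algorithm's achieved surplus is at least $\sum_{i\in J}(u_t^k(i,S)-\pi_t(i))-\tau(J\mid\cY_{t-1})$ for the feasible sub-bundle $J$ obtained by deleting from $S$ exactly the items blocked in the graph by $\cY_{t-1}$. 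This is where the item-by-item surplus bookkeeping pays off, mirroring~\cite{feldman2014combinatorial}.

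\textbf{Main obstacle.} I expect the hardest part to be the XOS analogue of Proposition~\ref{prop::key}: carefully tracking, per item $i'\in N_{t'}$ and per realization $v_{t'}^k$, which pre-computed surplus $\hu_{t'}^k(i',S')$ is blocked by which earlier allocated item, and showing that the graph-feasibility indicator $\Feas^G(\cdot)$ lines up so that the two expectations combine into a clean $\min\{\cdot,\cdot\}$ that Jensen then bounds below by the target quantity. In the scalar case each agent contributes one valuation; here the prophet may allocate a whole bundle $S'$ with value spread over its items, and the algorithm may allocate a strict feasible subbundle, so one must be careful that property~\eqref{eq::XOSproperty} is applied to the correct supporting additive function and that the threshold $\tau(J\mid\cY_{t-1})$ on a sub-bundle $J$ is no larger than $\tau(S\mid\cY_{t-1})$ would be (submodularity / monotonicity of the matroid residual in the added set). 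Once that lemma is in place, composing it with the $\frac{1}{d_2(G)+1}$ bound is immediate and yields the claimed $\frac{1}{(d_1(M)+1)(d_2(G)+1)}\OPT$.
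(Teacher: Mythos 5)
Your plan is correct and follows essentially the same route as the paper's own proof: the same decomposition of $\ALG$ with telescoping thresholds, XOS analogues of Propositions~\ref{prop::existing} and~\ref{prop::key} (the latter via exactly the ``delete the graph-blocked items from the prophet's bundle and apply property~\eqref{eq::XOSproperty} with the supporting additive function'' device, i.e.\ the paper's $\bar{\phi}$ construction), and the same sum-switching argument with constraint~\eqref{XOS::graphrank} for the $\frac{1}{d_2(G)+1}$ step. The only minor discrepancy is that in the XOS setting the Jensen step of Proposition~\ref{prop::key} is not needed, since the prophet's allocation probabilities $x_t^{k*}(S)$ are already conditioned on the realized valuation $v_t^k$, and the sub-bundle must also drop items with $u_t^k(i,S)<\pi_t(i)$ so the surplus matches $\hu_t^k(i,S)$.
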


The proof of Theorem~\ref{thm::XOS} is deferred to Appendix~\ref{sec::thm2pf}.
Below, we also generalize Definition~\ref{def::dDimIntervalConstraints}, and bound $d_2(G)$ in the setting where every item
$i \in N$ requests the attention of up to $d$ resources for a duration of time.
When combined, these results imply a $\frac{1}{(d_1(M)+1)(d+1)}$-guarantee.
\begin{proposition} \label{prop::d2<=d}
  Let $J$ be a set of resources, and let $U_i \subseteq J$ for every $i \in N$ with $|U_i| \leq d$. Associated with every item $i \in N_t$ are intervals $\{\mathbb{I}_{i}^j = [t, u_{i}^j]: j \in U_i\}$ with $u_i^j \geq t$ for all $j \in U_i$. An item $i \in N_t$ can be allocated at time $t$ only if all resources in $U_i$ are available. Then, this constraint can be modelled by a graph $G$ that satisfies $d_2(G) \leq d$.
\end{proposition}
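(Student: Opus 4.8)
The plan is to exhibit the graph $G$ explicitly and then bound $d_2(G)$ by the same partition-into-cliques argument that already underlies the observation, stated right after constraint~(\ref{eqn::intervalGraph}), that contiguous interval-scheduling constraints give $d_2\le1$. Define $\{i,i'\}\in E$ if and only if there is a shared resource $j\in U_i\cap U_{i'}$ with $\mathbb{I}_i^j\cap\mathbb{I}_{i'}^j\neq\emptyset$. The first thing I would check is that this $G$ correctly models the availability constraint: a set of items $S$ is independent in $G$ if and only if for every resource $j$ the intervals $\{\mathbb{I}_i^j: i\in S,\ j\in U_i\}$ are pairwise disjoint, which is exactly the requirement that each resource be free whenever an item needs it. This step is routine, being just an unpacking of the definition of $E$ resource by resource; concretely $G$ is the union of the $d$ interval graphs induced by the individual resources, each of which has the structure of Definition~\ref{def::dDimIntervalConstraints}.

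The substantive step is bounding $d_2(G)$. Fix an item $i\in N_t$ and consider its blocking set $B=\{i'\in N_{t'}:\{i,i'\}\in E,\ t'<t\}$. For each $i'\in B$ choose, arbitrarily if there is more than one option, a witness resource $j(i')\in U_i\cap U_{i'}$ with $\mathbb{I}_i^{j(i')}\cap\mathbb{I}_{i'}^{j(i')}\neq\emptyset$; this assigns every blocker a single witness, partitioning $B$ into the classes $B_j=\{i'\in B:j(i')=j\}$ for $j\in U_i$, of which there are at most $|U_i|\le d$. The key observation, mirroring the $d_2\le1$ argument, is that since $i'$ arrives at $t'<t$ whereas $\mathbb{I}_i^j$ starts at $t$, any overlap of $\mathbb{I}_{i'}^j=[t',u_{i'}^j]$ with $\mathbb{I}_i^j=[t,u_i^j]$ forces $u_{i'}^j\ge t$, so the point $t$ lies in $\mathbb{I}_{i'}^j$. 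Hence every item in a fixed class $B_j$ has its $j$-th interval containing $t$, so any two of them overlap on resource $j$ and are adjacent in $G$; i.e.\ each $B_j$ is a clique.

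It then follows that an independent set of $G[B]$ meets each $B_j$ in at most one vertex, so, because $B$ is the disjoint union of the $B_j$, we get $\alpha(G[B])\le|\{j\in U_i:B_j\neq\emptyset\}|\le|U_i|\le d$. Taking the maximum over all items $i$ yields $d_2(G)\le d$, as claimed. I expect the only real friction to be the bookkeeping around the witness-resource partition, and in particular making sure it is a genuine partition of $B$ (each blocker contributes to exactly one class even when several resources witness the conflict), since that is precisely what allows the size of an independent set of $G[B]$ to be split additively over the classes; once that is set up, the clique property is immediate from the contiguity of the intervals.
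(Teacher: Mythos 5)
Your proposal is correct and follows essentially the same route as the paper's proof: define $E$ by overlap on any shared resource, partition the earlier blockers of $i$ by an (arbitrarily chosen) witness resource in $U_i$, observe that contiguity forces every interval in a class to contain the point $t$ so each class is a clique, and conclude $\alpha(G[B])\le|U_i|\le d$. The only difference is presentational: you spell out the clique step (that overlap with $[t,u_i^j]$ from an earlier start time forces $t\in\mathbb{I}_{i'}^j$) slightly more explicitly than the paper does.
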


%


%
%
\bibliographystyle{splncs04}
\bibliography{bibliography}

\begin{thebibliography}{10}
\providecommand{\url}[1]{\texttt{#1}}
\providecommand{\urlprefix}{URL }
\providecommand{\doi}[1]{https://doi.org/#1}

\bibitem{chawla2010multi}
Chawla, S., Hartline, J.D., Malec, D.L., Sivan, B.: Multi-parameter mechanism
  design and sequential posted pricing. In: Proceedings of the forty-second ACM
  symposium on Theory of computing. pp. 311--320. ACM (2010)

\bibitem{chawla2019pricing}
Chawla, S., Miller, J.B., Teng, Y.: Pricing for online resource allocation:
  intervals and paths. In: Proceedings of the Thirtieth Annual ACM-SIAM
  Symposium on Discrete Algorithms. pp. 1962--1981. Society for Industrial and
  Applied Mathematics (2019)

\bibitem{correa2019recent}
Correa, J., Foncea, P., Hoeksma, R., Oosterwijk, T., Vredeveld, T.: Recent
  developments in prophet inequalities. ACM SIGecom Exchanges  \textbf{17}(1),
  61--70 (2019)

\bibitem{cunningham1984testing}
Cunningham, W.H.: Testing membership in matroid polyhedra. Journal of
  Combinatorial Theory, Series B  \textbf{36}(2),  161--188 (1984)

\bibitem{duetting2017prophet}
D{\"u}etting, P., Feldman, M., Kesselheim, T., Lucier, B.: Prophet inequalities
  made easy: Stochastic optimization by pricing non-stochastic inputs. In: 2017
  IEEE 58th Annual Symposium on Foundations of Computer Science (FOCS). pp.
  540--551. IEEE (2017)

\bibitem{dutting2015polymatroid}
D{\"u}tting, P., Kleinberg, R.: Polymatroid prophet inequalities. In:
  Algorithms-ESA 2015, pp. 437--449. Springer (2015)

\bibitem{ehsani2018prophet}
Ehsani, S., Hajiaghayi, M., Kesselheim, T., Singla, S.: Prophet secretary for
  combinatorial auctions and matroids. In: Proceedings of the Twenty-Ninth
  Annual ACM-SIAM Symposium on Discrete Algorithms. pp. 700--714. SIAM (2018)

\bibitem{feldman2014combinatorial}
Feldman, M., Gravin, N., Lucier, B.: Combinatorial auctions via posted prices.
  In: Proceedings of the twenty-sixth annual ACM-SIAM symposium on Discrete
  algorithms. pp. 123--135. SIAM (2014)

\bibitem{grotschel1981ellipsoid}
Gr{\"o}tschel, M., Lov{\'a}sz, L., Schrijver, A.: The ellipsoid method and its
  consequences in combinatorial optimization. Combinatorica  \textbf{1}(2),
  169--197 (1981)

\bibitem{im2011secretary}
Im, S., Wang, Y.: Secretary problems: Laminar matroid and interval scheduling.
  In: Proceedings of the twenty-second annual ACM-SIAM symposium on Discrete
  Algorithms. pp. 1265--1274. Society for Industrial and Applied Mathematics
  (2011)

\bibitem{kleinberg2012matroid}
Kleinberg, R., Weinberg, S.M.: Matroid prophet inequalities. In: Proceedings of
  the forty-fourth annual ACM symposium on Theory of computing. pp. 123--136.
  ACM (2012)

\bibitem{korte2006combinatorial}
Korte, B., Vygen, J.: Combinatorial optimization, Third edition. Springer
  (2006)

\bibitem{krengel1977semiamarts}
Krengel, U., Sucheston, L.: Semiamarts and finite values. Bulletin of the
  American Mathematical Society  \textbf{83}(4),  745--747 (1977)

\bibitem{krengel1978semiamarts}
Krengel, U., Sucheston, L.: On semiamarts, amarts, and processes with finite
  value. Probability on Banach spaces  \textbf{4},  197--266 (1978)

\bibitem{lee2018optimal}
Lee, E., Singla, S.: Optimal online contention resolution schemes via ex-ante
  prophet inequalities. arXiv preprint arXiv:1806.09251  (2018)

\bibitem{lucier2017economic}
Lucier, B.: An economic view of prophet inequalities. ACM SIGecom Exchanges
  \textbf{16}(1),  24--47 (2017)

\bibitem{ma2018constant}
Ma, Y., Rusmevichientong, P., Sumida, M., Topaloglu, H.: A constant-factor
  approximation algorithm for network revenue management  (2018)

\bibitem{rubinstein2016beyond}
Rubinstein, A.: Beyond matroids: Secretary problem and prophet inequality with
  general constraints. In: Proceedings of the forty-eighth annual ACM symposium
  on Theory of Computing. pp. 324--332. ACM (2016)

\bibitem{rubinstein2017combinatorial}
Rubinstein, A., Singla, S.: Combinatorial prophet inequalities. In: Proceedings
  of the Twenty-Eighth Annual ACM-SIAM Symposium on Discrete Algorithms. pp.
  1671--1687. SIAM (2017)

\bibitem{rusmevichientong2017dynamic}
Rusmevichientong, P., Sumida, M., Topaloglu, H.: Dynamic assortment
  optimization for reusable products with random usage durations. Draft
  available at https://people. orie. cornell.
  edu/huseyin/publications/reusable\_rm. pdf  (2017)

\end{thebibliography}

\appendix

\section{Why weakly balanced prices do not exist for Example~\ref{ex::KWBad}} \label{appx::balancedPricesDNE}

We explain why even the most general framework from \cite{duetting2017prophet}, which seeks \textit{weakly balanced prices} in a deterministic setting, does not appear to yield a constant-factor guarantee for Example~\ref{ex::KWBad}.
The reasoning is similar to that described in Section~\ref{sec::techniques}, with the issue caused by the optimum ``starting over''.

Consider Theorem~3.2 from \cite{duetting2017prophet} and consider any constants $\alpha$, $\beta_1$, and $\beta_2$.
We construct 
values of $T$, $C$, and $\eps$ for which their algorithm based on balanced prices extracts an arbitrary-small fraction of welfare.
We will follow the notation from Section~3 of \cite{duetting2017prophet}.

For every agent $t$, the corresponding outcome space is $\{\emptyset,\acc\}$, where $\acc$ refers to agent $t$ being accepted while $\emptyset$ refers to agent $t$ being rejected.
Let $\bx$ be the allocation in which only the second agent is accepted. Suppose $\bv$ is the valuation profile where the first agent has valuation $\frac{C + T\eps}{\eps}$, which occurs with probability $\eps$. In this case, $\bv(\ALG(\bv)) = \frac{C + T\eps}{\eps}$, since the prophet only allocates the item to the first agent. Then, $\mathcal{F}_\bx$ (the exchange-compatible set) cannot contain any allocation $y$ that accepts the first agent, because if it did, then it would not satisfy $(y_1, \bx_{-1}) \in \mathcal{F}$.
Therefore, it must be that $\bv(\OPT(\bv, \mathcal{F}_\bx)) \leq T-1$. Then, for $\bx$ and $\bv$ to satisfy the first constraint in weakly balanced prices, it must be that
\begin{align*}
\sum_{i \in T} p_i(x_i | \bx_{[i-1]}) = p_2(\acc | \emptyset ) \geq \frac{1}{\alpha} (\bv(\ALG(\bv))-\bv(\OPT(\bv, \mathcal{F}_\bx))) \geq \frac{1}{\alpha}(\frac{C}{\eps} + 1).
\end{align*}
Their posted price mechanism uses prices $\delta\cdot\bE_{\bv}[p_i^{\bv}(x_i | y)]$ for $\delta=\frac{1}{\beta_1 + \max\{2\beta_2, 1/\alpha\}}$, and the aforementioned valuation function realizes with probability $\eps$. Therefore, the price for agent 2 in the posted price mechanism is greater than
\[
\frac{\delta C}{\alpha} + \frac{\eps \delta}{\alpha}  > 1
\]
if $C \geq \frac{\alpha}{\delta}$. Similarly, the price for agents $i=3, \dots, T$ will also be greater than 1, so those agents will never be accepted. Then, the posted-price mechanism will achieve welfare $C + T\eps$, whereas the prophet will achieve $C+T-1+\eps$. As $T \rightarrow \infty$ and $\eps = o(1/T)$, the fraction of welfare achieved by the posted-price mechanism goes to 0.

\section{Deferred Proofs}

\begin{proof}[of Proposition~\ref{prop::existing}]
The residual function $\hR$, which involves the prophet selecting a max-value independent set in a matroid, is submodular by \cite{kleinberg2012matroid}.  That is, for all subsets $S$ and $S'$, $\hR(S\cup S')+\hR(S\cap S')\le\hR(S)+\hR(S')$.  Applying this inequality repeatedly yields the following:
\begin{align}
\hR(Y)
&=\bE_{\vhV}\left[\max_{S:S\cup Y\in\cI}\sum_{t\in S}(\hV_t-\tau(t|Y_{t-1})+\tau(t|Y_{t-1}))\right] \nonumber \\
&=\bE_{\vhV}\left[\max_{S:S\cup Y\in\cI}\left(\sum_{t\in S}(\hV_t-\tau(t|Y_{t-1}))+\frac{1}{d_1(M)+1}\sum_{t\in S}(\hR(Y_{t-1})-\hR(Y_{t-1}\cup\{t\}))\right)\right] \nonumber \\
&\le\bE_{\vhV}\left[\max_{S:S\cup Y\in\cI}\left(\sum_{t\in S}(\hV_t-\tau(t|Y_{t-1}))+\frac{1}{d_1(M)+1}\sum_{t\in S}(\hR(Y)-\hR(Y\cup\{t\}))\right)\right] \nonumber \\
&\le\bE_{\vhV}\left[\max_{S:S\cup Y\in\cI}\left(\sum_{t\in S}(\hV_t-\tau(t|Y_{t-1}))+\frac{1}{d_1(M)+1}(\hR(Y)-\hR(S\cup Y))\right)\right]. \label{eqn::2170}
\end{align}
Now, $\hR(Y)-\hR(S\cup Y)\le d_1(M)\hR(Y)$, because if $M$ is the free matroid then $\hR(Y)=\hR(S\cup Y)$ and $d_1(M)=0$; otherwise, $-\hR(S\cup Y)\le0$ and $d_1(M)=1$.
Therefore, we can rearrange (\ref{eqn::2170}) to get
\begin{align} \label{eqn::2789}
\hR(Y)-\frac{d_1(M)}{d_1(M)+1}\hR(Y)\le\bE_{\vhV}\left[\max_{S:S\cup Y\in\cI}\sum_{t\in S}(\hV_t-\tau(t|Y_{t-1}))\right].
\end{align}
Now, recall that each $\hV_t$ takes value $\ssy_t-\pi_t$ with probability $\ssx_t$, and value $-\pi_t$ otherwise.  Without the constraint that $S\cup Y\in\cI$, the RHS of (\ref{eqn::2789}) would equal $\sum_{t=1}^T\ssx_t[\ssy_t-\pi_t-\tau(t|Y_{t-1})]^+$.
Therefore, the RHS of (\ref{eqn::2789}) can be at most this value, completing the proof. $\blacksquare$
\end{proof}

\begin{proof}[of Proposition~\ref{prop::factOne}]
Our algorithm and analysis from Theorem~\ref{thm::mr} only required that the values of $\ssx_t$ satisfy (\ref{constr::matroid})--(\ref{constr::graph}), and that $\ssy_t$ corresponds to the average value of $V_t$ on an $\ssx_t$-fraction of sample paths.  We verify that all of this is still true for the ex-ante's values of $\ssx_t,\ssy_t$.

(\ref{constr::matroid}) follows immediately from LP constraint~(\ref{lp::matroid}) and the definition of $\ssx_t$.

To check (\ref{constr::graph}), consider any $t=1,\ldots,T$ and $S=\{t'<t:\{t,t'\}\in E\}$.
An agent $t'<t$ can only be adjacent to $t$ if there exists $j\in U_t\cap U_{t'}$ such that interval $\bbI^j_{t'}\ni t$, i.e.\ agent $t'$ lapses time $t$ on resource $j$.
Therefore, $S$ can be partitioned into $|U_t|$ sets where the agents $t'$ in each set all lapse time $t$ on some resource $j$.
By LP constraint~(\ref{lp::interval}), the sum of values of $\ssx_{t'}$ for the agents $t'$ in the same set is at most 1.
Therefore, $\sum_{t'\in S}\ssx_{t'}\le|U_t|$.
Since $|U_t|\le d$ for all agents $t$, constraints~(\ref{constr::graph}) hold with the RHS replaced by $d$.
The proof of Theorem~\ref{thm::mr} (specifically, that of Lemma~\ref{lem::secondStep}) holds with $\max_t\alpha(G[\{t'<t:\{t,t'\}\in E\}])=d_2(G)$ replaced by $d$.

Finally, $\ssy_t$ equals the average value of $V_t$ on its top $\ssx_t$ quantile, which can be directly checked from its definition and the fact that $x_{tk}$ was an optimal LP solution.  Therefore, we can invoke Theorem~\ref{thm::mr} to complete the proof. $\blacksquare$
\end{proof}

\begin{proof}[of Proposition~\ref{prop::factTwo}]
The LP is polynomially-sized except for the exponential family of constraints~(\ref{lp::matroid}).
This family of constraints define the matroid polytope and can be efficiently separated over, assuming oracle access to the matroid rank function (note that this is a submodular function minimization problem).
Furthermore, using the GLS ellipsoid method, separation implies that the LP can be solved to optimality and hence the vectors $\ssx,\ssy$ can be computed.
For further background on these results, we refer to \cite[Sec.~14.3]{korte2006combinatorial} and \cite{grotschel1981ellipsoid}.

Since $\ssx$ lies in the matroid polytope defined by (\ref{constr::matroid}), which is integral \cite[Sec.~13.4]{korte2006combinatorial}, $\ssx$ can indeed be represented by a distribution over independent sets.
Furthermore, there are explicit rounding procedures for doing so, which can compute a small convex combination of independent sets equaling $\ssx$ assuming an oracle to the matroid \cite{cunningham1984testing} (note that a convex representation with at most $T+1$ sets exists, by Caratheodory's theorem \cite[Sec.~3]{korte2006combinatorial}).
Therefore, the restricted prophet's distribution $\hD$ has small support and can be explicitly constructed in polynomial time, which allows us to efficiently evaluate the expectation over $\vhV$ in the definition of the residual $\hR$, and hence efficiently run our algorithm. $\blacksquare$
\end{proof}

\begin{proof}[of Proposition~\ref{prop::factThree}]
Consider any correlated distribution for $V_1,\ldots,V_T$ and prophet selection rule.  Set $x_{tk}$ to be the probability that the prophet accepts agent $t$ when her valuation realizes to $v^k$, for all $t$ and $k$.  We argue that this forms a feasible solution to the LP.  The constraints (\ref{lp::matroid})--(\ref{lp::interval}) follow from the linearity of expectation, since the prophet must select a set that is independent in both the matroid and the graph on every realization (note that the edges in the graph are defined so that that the interval constraints~(\ref{lp::interval}) are indeed satisfied by the $\{0,1\}$-incidence vector of any independent set in the graph).
Meanwhile, $x_{tk}\le p^k_t$ because the marginal probability that the valuation of agent $t$ is $v^k$ is at most $p^k_t$.
Finally, the objective value of the LP equals the expected welfare of the prophet.
Since the prophet corresponds to a feasible LP solution, the optimal LP solution can be no less, completing the proof.
 $\blacksquare$
\end{proof}

\begin{proof}[of Proposition~\ref{prop::d2<=d}]
  Let $G =(N, E)$ be the graph where $\{i, i'\} \in E$ if $\mathbb{I}_{i}^j \cap \mathbb{I}_{i'}^j \neq \emptyset$ for any resource $j$.
  Then, an item $i \in N_t$ can be allocated if and only if none of its neighbors are allocated.
  Therefore, a set of items is independent in the graph if and only if it is a feasible allocation.

  We show $d_2(G) \leq d$. Consider any $i \in N_t$, and let $S=\{i' \in N_{t'}: \{i, i'\} \in E, t' < t\}$ be the neighbors of $i$ that come before time $t$ that ``blocks`` $i$. We must show that the largest independent set in the graph $G[S]$ contains at most $d$ nodes. Since every neighbor of $i$ uses at least one resource in common with $i$, we can partition $S$ into $|U_i|$ sets based on which resource they have in common. (If a neighbor has more than one resource in common with $i$, then choose one of the resources at random.)
  Each set in the partition form a clique in the graph, since they all share a resource in common and their intervals overlap at time $t$.
  Therefore, an independent set in the graph $G[S]$ can contain at most one item from each of the sets in the partition. Thus, $\alpha(G[S]) \leq |U_i| \leq d$. $\blacksquare$
\end{proof}

\section{Proof of Theorem~\ref{thm::XOS}} \label{sec::thm2pf}

The proof structure is similar to that of Theorem~\ref{thm::mr}. Lemmas~\ref{lem::firstStep::XOS} and \ref{lem::secondStep::XOS} are analogous to Lemmas~\ref{lem::firstStep} and \ref{lem::secondStep}, respectively.  Propositions~\ref{prop::existing::XOS} and \ref{prop::key::XOS} also correspond to Propositions~\ref{prop::existing} and \ref{prop::key}.

\begin{lemma} \label{lem::firstStep::XOS}
  The algorithm earns at least $\frac{1}{d_1(M)+1}$ times the welfare of the restricted prophet. That is, $\ALG \geq \frac{1}{d_1(M)+1} \hR(\emptyset)$.
\end{lemma}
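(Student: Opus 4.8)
The plan is to follow the three-step structure of Lemma~\ref{lem::firstStep} line for line, now at the level of \emph{items} rather than agents. First, write $\ALG=\bE\big[\sumt v_t(Y_t)\big]$ and add and subtract $\sumt\tau(Y_t\mid\cY_{t-1})$. Since $\tau(Y_t\mid\cY_{t-1})=\frac{1}{d_1(M)+1}\big(\hR(\cY_{t-1})-\hR(\cY_{t-1}\cup Y_t)\big)$ and $\cY_{t-1}\cup Y_t=\cY_t$ (the algorithm always keeps $\cY_t\in\cI$, as any bundle violating the matroid gives objective $-\infty$ in~(\ref{alg::XOS})), the sum $\sumt\tau(Y_t\mid\cY_{t-1})$ telescopes, giving
\begin{align*}
\ALG=\frac{\hR(\emptyset)}{d_1(M)+1}+\bE\Big[\sumt\big(v_t(Y_t)-\tau(Y_t\mid\cY_{t-1})\big)\Big]-\frac{1}{d_1(M)+1}\bE\big[\hR(\cY_T)\big].
\end{align*}
So it suffices to show the middle ``surplus'' term is at least $\frac{1}{d_1(M)+1}\bE[\hR(\cY_T)]$, which I would prove through the XOS analogs of Propositions~\ref{prop::existing} and~\ref{prop::key}.

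Proposition~\ref{prop::existing::XOS} should upper-bound $\frac{1}{d_1(M)+1}\hR(\cY_T)$ by $\bE_{\cv\sim\ssD}$ of the maximum, over sub-allocations $\cJ\subseteq\proph(\cv)$ with $\cY_T\cup\cJ\in\cI$, of $\sumt\sum_{i\in J_t}[\hu_t^k(i,\proph(\cv)_t)-\tau(\{i\}\mid\cY_{t-1})]^+$, where $\tau(\{i\}\mid\cY_{t-1})$ is the single-item matroid threshold for the agent $t$ owning item $i$. Its proof is the deferred proof of Proposition~\ref{prop::existing} carried over: the restricted residual $\hR$ is submodular because, conditioned on each realization $\cv$, it is a max-weight matroid-independent-set residual over the fixed ground set $\proph(\cv)$, which is submodular by \cite{kleinberg2012matroid}, and expectation preserves submodularity; one then repeatedly applies submodularity to re-evaluate all thresholds at $\cY_T$, bounds $\hR(\cY_T)-\hR(\cY_T\cup\cJ)\le d_1(M)\hR(\cY_T)$, and uses $[\hu_t^k(i,S)-\tau]^+=[u_t^k(i,S)-\pi_t(i)-\tau]^+$ for $\tau\ge0$ together with the closed form for $\hR(\emptyset)$. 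Crucially, and unlike the basic case, I would \emph{not} drop the matroid constraint $\cY_T\cup\cJ\in\cI$ in this maximization --- the weaker bound it would yield is not matched by Proposition~\ref{prop::key::XOS}.

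Proposition~\ref{prop::key::XOS} is the crux: the surplus term is at least $\bE$ of that same quantity. Mirroring Proposition~\ref{prop::key}, split the surplus into $\bE\big[\sumt\big(v_t(Y_t)-\tau(Y_t\mid\cY_{t-1})-\sum_{i\in Y_t}\pi_t(i)\big)\big]$ and $\bE\big[\sumt\sum_{i\in Y_t}\pi_t(i)\big]$. For the second, expand $\pi_t(i)$ using~(\ref{eqn::defPi::XOS}) and switch the order of summation so that each future surplus $\hu_{t'}^k(i',S')$ is weighted by the number of allocated graph-neighbors of $i'$ among earlier agents, which is at least $1-\Feas^G(\cY_{t'-1}\cup\{i'\})$. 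For the first, use that $Y_t$ maximizes the objective in~(\ref{alg::XOS}), together with the subadditivity $\tau(S\mid\cY_{t-1})\le\sum_{i\in S}\tau(\{i\}\mid\cY_{t-1})$ (another consequence of submodularity of $\hR$) and the XOS supporting-function inequality~(\ref{eq::XOSproperty}), to lower-bound the algorithm's surplus at time $t$ by $\sum_{i\in S}\big(u_t^k(i,S)-\pi_t(i)-\tau(\{i\}\mid\cY_{t-1})\big)$ for a suitable graph-feasible, matroid-feasible sub-bundle $S$ of what the restricted prophet allocates to agent $t$. One then merges the ``blocked'' and ``free'' contributions of each item through a pointwise $\min$, bounds the threshold branch trivially and the valuation branch by Jensen's inequality, and combines with Proposition~\ref{prop::existing::XOS} to conclude $\ALG\ge\frac{1}{d_1(M)+1}\hR(\emptyset)$ after taking expectations over $\cY_T$.

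I expect the first half of Proposition~\ref{prop::key::XOS} to be the main obstacle. In the single-item setting the right-hand side is matroid-constraint-free, so Proposition~\ref{prop::key} goes through pointwise, term by term, with matroid infeasibility harmlessly sending the relevant threshold to $\infty$. Here, because the algorithm commits to a whole bundle $Y_t$ at once and pays the \emph{non-additive} threshold $\tau(Y_t\mid\cY_{t-1})$, and because items the restricted prophet allocates to a single agent may be individually --- but not jointly --- independent with $\cY_{t-1}$, the right-hand side must genuinely retain the matroid constraint and the term-by-term argument breaks. The fix I would pursue is to prove Proposition~\ref{prop::key::XOS} via a matroid-exchange argument in the style of the matroid prophet inequality \cite{kleinberg2012matroid}, carried out simultaneously with the XOS item-price substitution of \cite{feldman2014combinatorial} and the graph-blocking bookkeeping of the coefficients $\pi_t(i)$; the extra factor $d_1(M)+1$ we are allowed is exactly what such an exchange costs, and it costs nothing when the matroid is free --- which is why the free-matroid specialization recovers the XOS prophet inequality of \cite{feldman2014combinatorial}. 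Interleaving these three ingredients without further loss is where the real work lies.
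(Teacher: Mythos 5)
Your decomposition of $\ALG$, the telescoping of the thresholds, and the bookkeeping for $\bE[\sumt\sum_{i\in Y_t}\pi_t(i)]$ all match the paper, but the crux of the lemma --- the XOS analogs of Propositions~\ref{prop::existing} and~\ref{prop::key} --- is not actually established in your proposal, and your diagnosis of where the difficulty lies is mistaken. You insist on retaining the joint constraint $\cY_T\cup\cJ\in\cI$ in the upper bound on $\frac{1}{d_1(M)+1}\hR(\cY_T)$, claiming that the relaxed, per-agent version ``is not matched'' on the algorithm side; the paper does exactly the opposite (as does the basic case: the constraint $S\cup Y\in\cI$ is dropped at the end of the proof of Proposition~\ref{prop::existing}). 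In Proposition~\ref{prop::existing::XOS} the maximization is loosened to per-agent maxima $\max_{J\subseteq S}\big(\sum_{i\in J}\hu_t^k(i,S)-\tau(J\mid\cY_{t-1})\big)$ with the \emph{bundle} threshold $\tau(J\mid\cY_{t-1})$, which equals $+\infty$ whenever $J\cup\cY_{t-1}\notin\cI$. Consequently the per-agent maximizer $\phi^k_{t,\cY_{t-1}}(S)$ is automatically matroid-feasible with $\cY_{t-1}$, its graph-feasible, above-$\pi_t(i)$ subset $\bar\phi^k_{t,\cY_{t-1}}(S)$ is a bundle the algorithm could have allocated at time $t$, and the greedy rule (\ref{alg::XOS}) together with the XOS inequality (\ref{eq::XOSproperty}) and $\tau(\bar\phi\mid\cY_{t-1})\le\tau(\phi\mid\cY_{t-1})$ delivers the matching Proposition~\ref{prop::key::XOS}; joint independence of the maximizers across different agents is never needed, since loosening only enlarges the upper bound. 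Retaining $\cY_T\cup\cJ\in\cI$, as you propose, entangles the benchmark with the algorithm's \emph{future} allocations $Y_t,\ldots,Y_T$, which is precisely what obstructs the time-$t$ comparison you then need --- and you acknowledge this yourself, deferring the resolution to an unspecified ``matroid-exchange argument interleaved with the XOS substitution and graph bookkeeping.'' That deferred step is the heart of the lemma, so as written the proof has a genuine gap, and the gap stems from rejecting the very relaxation that makes the paper's argument go through.

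Two smaller points. First, the Jensen/$\min$ step you import from Proposition~\ref{prop::key} has no role in the XOS setting: there the valuation realizes to $v_t^k$ with probability $p_t^k$ and the prophet's conditional allocation probabilities $x_t^{k*}(S)$ replace the quantile argument, so the paper simply takes a convex combination over $(k,S)$ of the pointwise inequality (\ref{eq::lemalg::XOS}) and recombines the $\Feas^G(\cY_{t-1}\cup\{i\})$ and $1-\Feas^G(\cY_{t-1}\cup\{i\})$ contributions item by item. Second, your single-item thresholds $\tau(\{i\}\mid\cY_{t-1})$ (and the subadditivity $\tau(S\mid\cY_{t-1})\le\sum_{i\in S}\tau(\{i\}\mid\cY_{t-1})$) are an unnecessary detour: once bundle thresholds with the $+\infty$ convention are used, both propositions can be stated and proved directly at the bundle level, which is how the paper proceeds.
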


\begin{proof}
Let $Y_t \subseteq N_t$ be the random variable corresponding to the set of items allocated to agent $t$ by the algorithm, and let $\cY_t = Y_1 \cup \dots \cup Y_t$. The algorithm's expected welfare equals

\begin{align}
\ALG  &= \bE\left[\sumt (v_t(Y_t) - \tau(Y_t|\cY_{t-1})) + \sumt\tau(Y_t|\cY_{t-1})\right] \nonumber \\
&= \bE\left[\sumt (v_t(Y_t) - \tau(Y_t|\cY_{t-1}))+ \frac{1}{d_1(M)+1} \sumt(\hR(\cY_{t-1})-\hR(\cY_{t-1}\cup Y_t))\right] \nonumber \\
&=  \frac{1}{d_1(M)+1} \hR(\emptyset)  + \bE\left[\sumt (v_t(Y_t) - \tau(Y_t|\cY_{t-1}))\right] - \frac{1}{d_1(M)+1} \bE[\hR(\cY_T)], \label{eqn::algDecomp::XOS}
\end{align}
where the second equality follows from the definition of $\tau$, and the third equality follows from the fact that $\cY_{t-1}\cup Y_t=\cY_t$ for all $t$, causing the latter sum to telescope.

\begin{proposition} \label{prop::existing::XOS}
\[
\frac{1}{d_1(M)+1} \hR(\cY_T) \leq \sumt \sumk p_t^k \sumS x_t^{k*}(S) \max_{J \subseteq S} (\sum_{i \in J} \hu_t^k(i, S) - \tau(J| \cY_{t-1})).
\]
\end{proposition}

\begin{proof}[of Proposition~\ref{prop::existing::XOS}]


\begin{align}
\hR(\cY_T)  &= \bE_{\cv\sim\ssD}[\max_{\substack{\cJ \subseteq \proph(\cv) \\ \cY_T \cup \cJ \in \cI }} \sumt \sum_{i \in J_t} \hu_t^k(i, \proph(\cv)_t)] \nonumber \\
&= \bE_{\cv\sim\ssD}[\max_{\substack{\cJ \subseteq \proph(\cv) \\ \cY_T \cup \cJ \in \cI }} \Big(\sumt (\sum_{i \in J_t} \hu_t^k(i, \proph(\cv)_t) - \tau(J_t| \cY_{t-1})) + \sumt \tau(J_t|\cY_{t-1})\Big)]  \label{eq::Y_T2::XOS}
\end{align}

Fix any $\cJ = (J_1 , \dots ,J_T)$ such that $\cJ \cup \cY_T \in \cI$. The residual function $\hR$ is submodular by \cite{kleinberg2012matroid}.
That is, for all subsets $S$ and $S'$, $\hR(S\cup S')+\hR(S\cap S')\le\hR(S)+\hR(S')$.  Applying this inequality repeatedly yields the following:

\begin{align}
\sumt \tau(J_t | \cY_{t-1}) &= \frac{1}{d_1(M)+1}  \sumt (\hR(\cY_{t-1}) - \hR(Y_{t-1} \cup J_t)) \nonumber \\
&\leq \frac{1}{d_1(M)+1} \sumt (\hR(\cY_T  \cup J_{t-1}) - \hR(\cY_T \cup J_t))  \nonumber \\
&\leq \frac{1}{d_1(M)+1} \sumt (\hR(\cY_T \cup J_1 \cup \dots \cup J_{t-1}) - \hR(\cY_T \cup J_1 \cup \dots \cup J_t))  \nonumber \\
&\leq \frac{1}{d_1(M)+1} (\hR(\cY_T) -\hR(\cY_T \cup \cJ)). \label{eq::sumthresholds}
\end{align}
Now, $\hR(\cY_T) -\hR(\cY_T \cup \cJ)\le d_1(M)\hR(\cY_T)$, because if $M$ is the free matroid then $\hR(\cY_T)=\hR(\cY_T \cup \cJ)$ and $d_1(M)=0$; otherwise, $-\hR(\cY_T \cup \cJ)\le0$ and $d_1(M)=1$.
Therefore, we substitute \eqref{eq::sumthresholds} into \eqref{eq::Y_T2::XOS} and rearrange to get
\begin{align*}
\hR(Y)-\frac{d_1(M)}{d_1(M)+1}\hR(Y)&\le\bE_{\cv\sim\ssD}[\max_{\substack{\cJ \subseteq \proph(\cv) \\ \cY_T \cup \cJ \in \cI }} \sumt (\sum_{i \in J_t} \hu_t^k(i, \proph(\cv)_t) - \tau(J_t| \cY_{t-1})) ] \\
\frac{1}{d_1(M)+1}\hR(Y) &\leq \bE_{\cv\sim\ssD}[ \sumt \max_{J_t \subseteq \proph(\cv)_t}(\sum_{i \in J_t} \hu_t^k(i, \proph(\cv)_t) - \tau(J_t| \cY_{t-1})) ] \\
&\leq \sumt \sumk p_t^k \sumS x_t^{k*}(S) \max_{J \subseteq S} (\sum_{i \in J} \hu_t^k(i, S) - \tau(J| \cY_{t-1})).
\end{align*}
The second step follows from loosening the constraint on $\cJ$ under the $\max$ operator, and the final step opens up the expectation using $p_t^k = \Pr(v_t = v_t^k)$ and $x_t^{k*}(S) = \Pr_{\cv\sim\ssD}(\OPT(v)_t = S | v_t = v_t^k)$. $\blacksquare$

\end{proof}

\begin{proposition} \label{prop::key::XOS}
\[
\bE[\sumt (v_t(Y_t) - \tau(Y_t | \cY_{t-1}))] \geq
\bE[\sumt \sumk p_t^k \sumS x_t^{k*}(S) \max_{J \subseteq S} (\sum_{i \in J} \hu_t^k(i, S) - \tau(J| \cY_{t-1})) ].
\]
\end{proposition}

\begin{proof}[of Proposition~\ref{prop::key::XOS}]
It is useful to define a mapping $\phi_{t, \cY_{t-1}}^k: 2^{N_t} \rightarrow 2^{N_t}$ as:
\[\phi_{t, \cY_{t-1}}^k(S) =
\begin{cases}
      \argmax_{J\subseteq S} (\sum_{i \in J} \hu_t^k(i, S) - \tau(J| \cY_{t-1})) & S\in \cF \\
      \emptyset &  S \notin \cF
   \end{cases}
\]
$\phi_{t, \cY_{t-1}}^k(S)$ represents the subset of $S$ which maximizes the ``surplus" at time $t$ and valuation $k$, when $\cY_{t-1}$ are the items that have been allocated. Note that $S$ is not necessarily feasible when $\cY_{t-1}$ has already been allocated, but $\phi_{t, \cY_{t-1}}^k(S) \cup \cY_{t-1}$ is independent in $M$ at time $t$, since $\tau(J|\cY_{t-1})=-\infty$ if $J \cup \cY_{t-1}$ is not independent in $M$.
Define
\[
\bar{\phi}_{t, \cY_{t-1}}^k(S) = \{i \in \phi_{t, \cY_{t-1}}^k(S): i \cup \cY_{t-1} \text{ independent in $G$}, u_t^k(i, S) \geq \pi_t(i)\}
\]
to be the subset of $\phi_{t, \cY_{t-1}}^k(S)$ that is independent in $G$ and whose value is above the threshold $\pi_t(i)$.

We decompose the LHS as $\bE[\sumt (v_t(Y_t) - \sum_{i \in Y_t} \pi_t(i) - \tau(Y_t | \cY_{t-1})) ] + \bE[\sumt \sum_{i \in Y_t} \pi_t(i)]$
and analyze the two expectations separately.

We use the definition of the algorithm \eqref{alg::XOS} to re-write the first expectation:

\begin{align}
  &\bE[\sumt (v_t(Y_t) - \sum_{i \in Y_t} \pi_t(i) - \tau(Y_t | \cY_{t-1})) ] \nonumber \\
  = & \bE[\sumt \sumk p_t^k \max_{S \subseteq N_t: S \text{ feasible}} (v_t^k(S) - \sum_{i \in S} \pi_t(i) - \tau(S|\cY_{t-1}))]. \label{eq::algexpect}
\end{align}

For any $S$, since $\bar{\phi}_{t, \cY_{t-1}}^k(S)$ is feasible, the algorithm has a higher surplus than had $\bar{\phi}_{t, \cY_{t-1}}^k(S)$ been taken:

\begin{align} \label{eq::lemalg::XOS}
& \max_{S \subseteq N_t: S \text{ feasible}} (v_t^k(S) - \sum_{i \in S} \pi_t(i) - \tau(S|\cY_{t-1} ))  \nonumber \\
&\geq v_{t}^{k}(\bar{\phi}_{t, \cY_{t-1}}^k(S)) - \sum_{i \in \bar{\phi}_{t, \cY_{t-1}}^k(S)} \pi_t(i) - \tau(\bar{\phi}_{t, \cY_{t-1}}^k(S) | \cY_{t-1}).
\end{align}

The property \eqref{eq::XOSproperty} of XOS valuations gives us $v_t^k(\bar{\phi}_{t, \cY_{t-1}}^k(S)) \geq \sum_{i \in \bar{\phi}_{t, \cY_{t-1}}^k(S)} u_t^k(i, S)$, since $\bar{\phi}_{t, \cY_{t-1}}^k(S) \subseteq S$.
We use this property and take a convex combination of \eqref{eq::lemalg::XOS} over subsets $S \subseteq N_t$.
\eqref{eq::algexpect} becomes:


\begin{align}
  &\bE[\sumt (v_t(Y_t) - \sum_{i \in Y_t} \pi_t(i) - \tau(Y_t | \cY_{t-1})) ] \nonumber \\
  \geq& \bE[\sum_{t=1}^T \sum_{k=1}^K p_{t}^{k} \sum_{S\subseteq N_{t}} x_{t}^{k*}(S) \big(\sum_{i \in \bar{\phi}_{t, \cY_{t-1}}^k(S)} (u_{t}^k(i, S) - \pi_t(i) )- \tau(\bar{\phi}_{t, \cY_{t-1}}^k(S)| \cY_{t-1}) \big)]  \nonumber  \\
  \geq& \bE[\sum_{t=1}^T \sum_{k=1}^K p_{t}^{k} \sum_{S\subseteq N_{t}} x_t^{k*}(S) ( \sum_{i \in \phi_{t, \cY_{t-1}}^k(S)} \Feas^G(\cY_{t-1} \cup \{i\})\hu_{t}^{k}(i, S)  - \tau(\phi_{t, \cY_{t-1}}^k(S) | \cY_{t-1}) )]. \label{eq::feasible::XOS}
\end{align}

The last step uses the definition of $\bar{\phi}_{t, \cY_{t-1}}^k(S)$ and the fact that $\tau(\bar{\phi}_{t, \cY_{t-1}}^k(S) | \cY_{t-1}) \leq  \tau(\phi_{t, \cY_{t-1}}^k(S) | \cY_{t-1})$.
The second expectation can be rewritten as
\begin{align}
 \bE[\sumt \sum_{i \in Y_t} \pi_t(i) ]
&=\bE[ \sumt \sum_{i \in Y_t} \sum_{t' > t} \sum_{k'=1}^K p_{t'}^{k'} \sum_{S'\subseteq N_{t'}} x_{t'}^{k'*}(S') \sum_{i' \in S'} \hu_{t'}^{k'}(i', S') \bI(\{i,i'\} \in E) ]  \nonumber \\
&= \bE[ \sum_{t'=1}^T \sum_{k'=1}^K p_{t'}^{k'} \sum_{S'\subseteq N_{t'}} x_{t'}^{k'*}(S') \sum_{i' \in S'} \hu_{t'}^{k'}(i', S') [\sum_{t<t'}  \sum_{i \in Y_t}\bI(\{i,i'\} \in E)]],   \label{eq::infeasible::XOS}
\end{align}
after applying the definition of $\pi_t(i)$ from \eqref{eqn::defPi::XOS} and switching sums. At time $t'$, $i'$ is dependent in $G$ if and only if there was an item $i$ with $\{i, i'\} \in E$ that was allocated at an earlier time step. Therefore, the term in the square brackets in \eqref{eq::infeasible::XOS} is greater than $1- \Feas^G(\cY_{t'-1} \cup i')$.

Recombining \eqref{eq::feasible::XOS} and \eqref{eq::infeasible::XOS}:
\begin{align*}
  & \bE[\sumt (v_t(Y_t) - \tau(Y_t | \cY_{t-1}))]    \\
\geq& \bE[\sumt \sumk p_{t}^{k} \sumS x_{t}^{k*}(S)( \sum_{i \in \phi_{t, \cY_{t-1}}^k(S)} \hu_{t}^{k}(i, S) -  \tau(\phi_{t, \cY_{t-1}}^k(S) | \cY_{t-1}) ) ]\\
\geq& \bE[\sumt \sumk p_{t}^{k} \sumS x_{t}^{k*}(S) \max_{J\subset S} (\sum_{i \in  J} \hu_{t}^{k}(i, S) -  \tau(J | \cY_{t-1}) )].
\end{align*}

The last inequality uses the definition of $\phi_{t, \cY_{t-1}}^k(S)$ and the fact that $x_{t}^{k*}(S) = 0$ for all $S \notin \cF$, since the prophet cannot take such an $S$. $\blacksquare$

\end{proof}

Equipped with Propositions~\ref{prop::existing::XOS}--\ref{prop::key::XOS}, the proof of Lemma~\ref{lem::firstStep::XOS} now follows from (\ref{eqn::algDecomp::XOS}).  $\blacksquare$

\end{proof}

\begin{lemma} \label{lem::secondStep::XOS}
  The restricted prophet earns at least $\frac{1}{d_2(G)+1}$ times the welfare of the actual prophet.
That is, $\hR(\emptyset)\ge\frac{1}{d_2(G)+1}\cdot\OPT$.
\end{lemma}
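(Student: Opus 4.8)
The plan is to transcribe the proof of Lemma~\ref{lem::secondStep} to the item level. Recall that the restricted prophet's value is $\hR(\emptyset)=\sumt\sumk p_t^k\sumS x_t^{k*}(S)\sum_{i\in S}\hu_t^k(i,S)$ with $\hu_t^k(i,S)=[u_t^k(i,S)-\pi_t(i)]^+$, and that, since the additive function $w_\ell$ supporting $v_t^k$ at a set $S$ satisfies $w_\ell(S)=\sum_{i\in S}w_\ell(\{i\})=\sum_{i\in S}u_t^k(i,S)$, we have $v_t^k(S)=\sum_{i\in S}u_t^k(i,S)$; combined with~\eqref{eq::OPT::XOS} this gives $\OPT=\sumt\sumk p_t^k\sumS x_t^{k*}(S)\sum_{i\in S}u_t^k(i,S)$.

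First I would drop the positive part in each summand, using $\hu_t^k(i,S)\ge u_t^k(i,S)-\pi_t(i)$, to obtain
\[
\hR(\emptyset)\ \ge\ \OPT\ -\ \sumt\sumk p_t^k\sumS x_t^{k*}(S)\sum_{i\in S}\pi_t(i).
\]
Then I would substitute the definition~\eqref{eqn::defPi::XOS} of $\pi_t(i)$ into the subtracted term and switch the order of summation, moving the sum over the ``future'' indices $(t',k',S',i'\in S')$ to the front, so that the subtracted term becomes
\[
\sum_{t'=1}^{T}\sum_{k'=1}^{K}p_{t'}^{k'}\sum_{S'\subseteq N_{t'}}x_{t'}^{k'*}(S')\sum_{i'\in S'}\hu_{t'}^{k'}(i',S')\cdot\Big(\sum_{t<t'}\sum_{k=1}^{K}p_t^{k}\sum_{S\subseteq N_t}x_t^{k*}(S)\sum_{i\in S}\bI(\{i,i'\}\in E)\Big).
\]

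The key step is to bound the parenthesized quantity by $d_2(G)$. I would invoke the graph-rank inequality~\eqref{XOS::graphrank} with the vertex set $S=S_1\cup\dots\cup S_T$ where $S_t=\{i''\in N_t:\{i'',i'\}\in E\}$ for $t<t'$ and $S_t=\emptyset$ for $t\ge t'$, i.e.\ $S$ is the set of graph-neighbours of $i'$ that arrive strictly before time $t'$. For that choice, $\bI[i\in S_t]=\bI(\{i,i'\}\in E)$ for every $i\in N_t$ with $t<t'$ and $\bI[i\in S_t]=0$ for $t\ge t'$, so the left-hand side of~\eqref{XOS::graphrank} coincides exactly with the parenthesized double-sum; hence it is at most $\alpha(G[S])=\alpha\big(G\big[\{i''\in N_t:\{i'',i'\}\in E,\ t<t'\}\big]\big)$, which by the definition of $d_2(G)$ is at most $d_2(G)$. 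Since every $\hu_{t'}^{k'}(i',S')\ge0$, it follows that $\sumt\sumk p_t^k\sumS x_t^{k*}(S)\sum_{i\in S}\pi_t(i)\le d_2(G)\cdot\hR(\emptyset)$; plugging this into the previous display gives $\hR(\emptyset)\ge\OPT-d_2(G)\,\hR(\emptyset)$, and rearranging yields $\hR(\emptyset)\ge\frac{1}{d_2(G)+1}\OPT$, as claimed.

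The only non-routine point is this last bounding step: one must select exactly the right vertex set for~\eqref{XOS::graphrank} and verify that the induced subgraph appearing in the resulting $\alpha(\cdot)$ term is precisely the one maximized over in the definition of $d_2(G)$. The other ingredients — dropping the $[\cdot]^+$, unrolling $\pi_t(i)$, and interchanging the order of summation — are mechanical copies of the additive argument in Lemma~\ref{lem::secondStep}.
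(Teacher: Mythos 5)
Your proposal is correct and follows essentially the same route as the paper's proof: drop the $[\cdot]^+$, expand $\pi_t(i)$ and swap the order of summation, bound the inner bracket by $\alpha(G[\{i\in N_t:\{i,i'\}\in E,\,t<t'\}])\le d_2(G)$ via the ex-ante graph constraint~\eqref{XOS::graphrank}, and rearrange. The identification of the vertex set to plug into~\eqref{XOS::graphrank}, and the use of $v_t^k(S)=\sum_{i\in S}u_t^k(i,S)$ for the supporting additive function to recover $\OPT$, match the paper's argument exactly.
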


\begin{proof}

By the definition of $\hR(\emptyset)$, we have
\small
\begin{align*}
  \hR(\emptyset)
  =& \sumt \sumk p_t^k \sumS x_t^{k*}(S) \sum_{i \in S} (u_t^k(i, S) - \pi_t(i))^+ \\
  \geq& \sumt \sumk p_t^k \sumS x_t^{k*}(S) \sum_{i \in S} u_t^k(i, S) - \sumt \sumk p_t^k \sumS x_t^{k*}(S) \sum_{i \in S}\pi_t(i) \\
  =& \OPT - \sumt \sumk p_t^k \sumS x_t^{k*}(S) \sum_{i \in S}\Big[\sum_{t' > t} \sum_{k'=1}^K p_{t'}^{k'} \sum_{S'\subseteq N_{t'}} x_{t'}^{{k'}*}(S') \sum_{i' \in S'} \hu_{t'}^{k'}(i', S') \bI(\{i,i'\} \in E))\Big], \\
\end{align*}
\normalsize
where the last step uses the definition of $\OPT$ and $\pi_t(i)$. Next, we move all the summations within the square brackets to the front:

\small
\begin{align*}
  \hR(\emptyset)
  \geq \OPT - \sum_{t'=1}^T \sum_{k'=1}^K p_{t'}^{k'} \sum_{S'\subseteq N_{t'}} x_{t'}^{{k'}*}(S') \sum_{i' \in S'} \hu_{t'}^k(i', S')[\sum_{t < t'} \sumk p_t^k \sumS x_t^{k*}(S) \sum_{i \in S} \bI(\{i,i'\} \in E))].
\end{align*}
\normalsize

For a fixed $t'$ and $i'$, the prophet solution satisfies \eqref{XOS::graphrank} with $S = \{i \in N_{t}: \{i, i'\} \in E, t < t'\}$. Therefore, the term in the square brackets is less then $\alpha(G[S]) \leq d_2(G)$ by definition of $d_2(G)$. Then,

\begin{align*}
  \hR(\emptyset)
  &\geq \OPT - \sum_{t'=1}^T \sum_{k'=1}^K p_{t'}^{k'} \sum_{S'\subseteq N_{t'}} x_{t'}^{{k'}*}(S') \sum_{i' \in S'} \hu_{t'}^{k'}(i', S')[d_2(G)] \\
  &\geq \OPT - d_2(G) \hR(\emptyset).
\end{align*}
Rearranging yields $\hR(\emptyset) \geq \frac{1}{d_2(G)+1} \OPT$. $\blacksquare$

\end{proof}

\end{document}